\newtheorem{theorem}{Theorem}
\newtheorem{proof}{Proof}
\newtheorem{definition}{Definition}
\newtheorem{prob}{Problem}
\begin{document}

\title{\Large Online Grammar Compression for Frequent Pattern Discovery}
\author{\normalsize Shouhei Fukunaga, Yoshimasa Takabatake, Tomohiro I and Hiroshi Sakamoto\thanks{mailto: hiroshi@ai.kyutech.ac.jp}\\
{\small Kyushu Institute of Technology, Japan}
}
\date{\empty}
\maketitle

    \abstract{ 
Various grammar compression algorithms have been proposed in the last decade.
A grammar compression is a restricted CFG deriving the string deterministically.
An efficient grammar compression develops a smaller CFG by finding duplicated patterns and removing them.
This process is just a frequent pattern discovery by grammatical inference.
While we can get any frequent pattern in linear time using a preprocessed string,
a huge working space is required for longer patterns,
and the whole string must be loaded into the memory preliminarily.
We propose an online algorithm approximating this problem within a compressed space.
The main contribution is an improvement of the previously best known 
approximation ratio $\Omega(\frac{1}{\lg^2m})$ to $\Omega(\frac{1}{\lg^*N\lg m})$ 
where $m$ is the length of an optimal pattern in a string of length $N$ 
and $\lg^*$ is the iteration of the logarithm base $2$.
For a sufficiently large $N$, $\lg^*N$ is practically constant.
The experimental results show that our algorithm extracts nearly optimal patterns
and achieves a significant improvement in memory consumption compared to 
the offline algorithm.
}

\section{Introduction}
A \emph{grammar compression} of a string is a context-free grammar (CFG) that derives only the string.
In recent decades, various grammar compression algorithms have been proposed,
showing good performance, especially for a \emph{repetitive string}
in which long identical patterns (substrings) can be observed many times.
Such data are currently ubiquitous, for example, in genome sequences collected from similar species
and in versioned documents maintained by Wikipedia and GitHub, etc.
Because repetitive strings are growing rapidly,
data processing methods on grammar compression have been extensively studied 
as a promising way to address repetitive strings
(\emph{e.g.}, \cite{Larsson00,Lehman-Shelat02,Yang03,Rytter03,Sakamoto05,Charikar05,Maruyama2012,Maruyama2013,Tabei13,Maruyama2014}).

\emph{Frequent pattern discovery} is a classic problem in pattern mining for sequence data (\emph{e.g.}, \cite{FPM}),
where we focus on a string and say that a pattern (substring) is frequent if it occurs at least twice.
Longer patterns are often the target of discovery, as they seem to characterize the input string better.
Although we have linear time solutions using a full-text index such as suffix tree and suffix array by \cite{Sadakane00},
it requires a huge working space for large-scale data.
Even if we opt for space-efficient alternatives of these data structures, such as FM-index by \cite{Ferragina00},
we still have to load the whole string into memory, at least at the construction phase,
because there are no known algorithms to construct them in a streaming fashion.
Due to these drawbacks, it is difficult to apply these algorithms to stream data.

A reasonable approach to avoid this difficulty is to seek an approximate frequent pattern instead of the exact solution.
In the framework of grammar compression, an approximate pattern is found as a frequent subtree.
Then, a suitable parsing tree should preserve as many occurrences of a common substring as possible.
Edit-sensitive parsing (ESP) by \cite{Cormode07} matches the claim;
ESP approximately solves the NP-hard problem of 
the generalized edit distance for measuring the similarity of two strings, and
online algorithms and applications of ESP were widely proposed (\emph{e.g.}, \cite{Hach2012,ESP,Takabatake14-2,Takabatake15,Takabatake16,Nishimoto2016}).

As seen above, grammar compression is closely related to the approximate pattern discovery
because a good compression ratio is achieved by finding frequent substrings and replacing them 
by a variable that derives the substrings.
\cite{Nakahara13} focused on a grammar compression algorithm (called ESP-comp) based on ESP
and showed that it approximately solves the frequent pattern discovery problem.
That is, they showed that for any frequent pattern $P$,
there is a variable $X$ such that
(1) $X$ derives a string of length $\Omega(\frac{|P|}{\lg^2|P|})$ that is a substring of $P$ and
(2) $X$ accompanies any occurrence of $P$ in the string.
They confirmed by computational experiments that the algorithm
efficiently finds long frequent patterns from large repetitive data.

In this paper, we follow the previous work and show a new lower bound $\Omega(\frac{1}{\lg^*N\lg |P|})$ for approximation, 
where $N$ is the length of the string and $\lg^*$ is the iteration of logarithm base $2$.
This improves the previous bound $\Omega(\frac{1}{\lg^2|P|})$ as $\lg^*N \leq \lg |P|$ in practice.
In addition, we establish an online approximation algorithm within a compressed space using ESP-comp.
Note that the previous algorithm was not online, \emph{i.e.}, the whole string must be loaded into memory,
but recent progress by \cite{Maruyama2013} has enabled the computation of ESP-comp in compressed space in a streaming fashion.
We implement our algorithm and show experimentally that the approximation is nearly optimal 
and the improvement of memory consumption is significant for real data.

\section{Definition}
\subsection{Notation}

Let $\Sigma$ be a finite alphabet, and $\sigma$ be $|\Sigma|$.  All elements
in $\Sigma$ are totally ordered.  Let us denote by $\Sigma^*$ the set of all
strings over $\Sigma$, and by $\Sigma^q$ the set of strings of length $q$ over
$\Sigma$, \emph{i.e.}, $\Sigma^q=\{w\in \Sigma^*:|w|=q\}$ and an element in
$\Sigma^q$ is called a $q$-gram.  The length of a string $S$ is denoted by
$|S|$.  The empty string $\epsilon$ is a string of length $0$, namely
$|\epsilon|=0$.  For a string $S=\alpha\beta\gamma$, $\alpha$, $\beta$ and
$\gamma$ are called the prefix, substring, and suffix of $S$, respectively.
The $i$-th character of a string $S$ is denoted by $S[i]$ for $i \in [1,|S|]$.
For a string $S$ and interval $[i,j]$ ($1 \leq i \leq j \leq |S|$), let
$S[i,j]$ denote the substring of $S$ that begins at position $i$ and ends at
position $j$, and let $S[i,j]$ be $\epsilon$ when $i>j$.
For a string $S$ and integer $q \geq 0$, let
$\mathit{pre}(S,q)=S[1,q]$ and $\mathit{suf}(S,q)=S[|S|-q+1,|S|]$.
For strings $S$ and $P$, let $\mathit{freq_S(P)}$ denote the number of occurrences of $P$ in $S$,
\emph{i.e.}, $\mathit{freq_S(P)} = |\{ i : S[i, i+|P|-1] = P \}|$.
We assume a recursive enumerable set
${\cal X}$ of variables with $\Sigma\cap {\cal X}=\emptyset$.  All elements in
$\Sigma\cup {\cal X}$ are totally ordered, where all elements in $\Sigma$ must
be smaller than those in ${\cal X}$.  In this paper, we call a sequence of
symbols from $\Sigma \cup X$ a string.  Let us define $\lg^{(1)}{u}=\lg{u}$,
and $\lg^{(i+1)}{u}=\lg{(\lg^{(i)}{u})}$ for $i\geq 1$.  The iterated
logarithm of $u$ is denoted by $\lg^*{u}$, and defined as the number of times
the logarithm function must be applied before the result is less than or equal
to $1$, \emph{i.e.}, $\lg^*u=\min\{i:\lg^{(i)}{u} \leq 1\}$.

\subsection{Grammar Compression}

We consider a special type of context-free grammar (CFG) $G=(\Sigma,V,D,X_s)$
where $V$ is a finite subset of ${\cal X}$, 
$D$ is a finite subset of $V\times (V\cup \Sigma)^*$, and $X_s \in V$ is the start symbol.
A grammar compression of a string $S$ is a CFG deriving only $S$ deterministically, \emph{i.e.},
for any $X\in V$ there exists exactly one production rule in $D$ and there is no loop.
Because each $G$ has its Chomsky normal form,
we can assume that any grammar compression is in \emph{Straight-line program (SLP)} by \cite{SLP}:
any production rule is in the form of $X_k\to X_iX_j$ where 
$X_i,X_j \in \Sigma\cup V$ and $1\leq i,j<k\leq n + \sigma$. 

The size of an SLP is the number of variables, \emph{i.e.}, $|V|$ and let $n=|V|$.
$\mathit{val}(X_i)$ for variable $X_i \in V$ denotes the string derived from $X_i$.
For $w \in (V\cup \Sigma)^*$, let $\mathit{val}(w) = \mathit{val}(w[1]) \cdots \mathit{val}(w[|w|])$.

The parse tree of $G$ is a rooted ordered binary tree such that 
(i) the internal nodes are labeled by variables and (ii) the leaves are labeled by alphabet symbols.
In a parse tree, any internal node $Z$ corresponds to a production rule $Z\to XY$,
and has the left child with label $X$ and the right child with label $Y$.

A phrase dictionary $D$ is a data structure for directly accessing the phrase $X_iX_j$ for any $X_k$ 
if $X_k\to X_iX_j$ exists.
On the other hand, a reverse dictionary $D^{-1}$ is a data structure for directly accessing $X_k$
for $X_iX_j$ if $X_k\to X_iX_j$ exists.

\subsection{Succinct Data Structure}
A grammar compression is encoded by succinct data structures.
A rank/select dictionary for a bit string $B$ by \cite{Jacobson89} supports the
following queries: $\mathit{rank}_c(B,i)$ returns the number of occurrences of
$c \in \{0,1\}$ in $B[0,i]$; $\mathit{select}_c(B,i)$ returns the position of
the $i$-th occurrence of $c\in\{0,1\}$ in $B$; $\mathit{access}(B,i)$ returns
the $i$-th bit in $B$.  Data structures with only the $|B| + o(|B|)$ bits
storage to achieve $O(1)$ time rank and select queries have been presented by \cite{Raman07}.
Wavelet tree is an extension of the dictionary over string proposed by \cite{Grossi03}.
\cite{Golynski06} proposed an improvement of the wavelet tree, called GMR, 
in $(n+\sigma)\lg{(n+\sigma)}+o((n+\sigma)\lg{(n+\sigma)})$ bits while
supporting both rank and access queries in $O(\lg{\lg{(n+\sigma)}})$ times and select queries in $O(1)$ time.

\subsection{Approximate Frequent Pattern}
A substring $P=S[i,j]$ is said to be frequent if it appears at least twice, \emph{i.e.}, $\mathit{freq_S(P)}\geq 2$.
We focus on an approximation of the problem to find all frequent patterns defined as follows.

\begin{prob} \label{prob:1}
Let $T$ be a parsing tree of a grammar compression deriving $S\in\Sigma^*$.
A variable $X$ in $T$ is called a core of $P$ if 
for each occurrence $S[i,j]=P$, there exists an occurrence of $X$ in $T$ deriving a substring $S[\ell,r]$
for a subinterval $[\ell,r]$ of $[i,j]$.
Then, $P$ is said to be approximated by $X$ with $\delta$ if $\frac{|val(X)|}{|P|}\geq\delta$.
The problem of approximated frequent pattern (AFP) is to compute $T$ 
that guarantees a core $X$ of any frequent pattern $P$ in $S$ with an approximation ratio $\delta>0$.
\end{prob}

AFP is well-defined with a small $\delta$ because for any $S$ and its frequent substring $P$
any alphabet symbol forming $P$ satisfies the condition with $\delta=\frac{1}{|P|}$.
\cite{Nakahara13} proposed an offline algorithm with approximation $\Omega(\frac{1}{\lg^2|P|})$.
We aim to construct the parsing tree by an online algorithm in a compressed space with a larger $\delta$
improving the best known approximation ratio.
In our algorithm, a grammar compression is represented by ESP (edit sensitive parsing) 
and succinctly encoded by POSLP (post-order SLP).
We next review the related techniques.

\subsection{Edit Sensitive Parsing}
Originally, ESP (Edit Sensitive Parsing) was introduced 
by \cite{Cormode07} and widely applied in data compression and information retrieval
(e.g., \cite{Hach2012,Takabatake14-2,Takabatake15,Takabatake16,Nishimoto2016}).
ESP is a parsing technique intended to efficiently construct a consistent parsing for same substrings as follows.

For each substring $S[i,j]$, we can decompose it into a sequence of
subtrees rooted by symbols $X_1,X_2,\ldots,X_q$.
For each frequent $P$ (e.g., $S[i,j] = S[k,\ell] = P)$,
we can find a consistent decomposition for the occurrences
by the trivial decomposition of $X_1=S[i],X_2=S[i+1],\ldots,X_p=S[j]$.
For this problem, ESP tree guarantees a better decomposition:
$(X_1,X_2,\ldots,X_q)$ is embedded into any occurrence of $P$ with a small $q$
e.g., \cite{Nakahara13} showed that $q=\Omega(|P|/\lg^2|P|)$
and we improve it to $\Omega(|P|/\lg^*|P|\lg|P|)$ in this paper.
Any symbol in the decomposition is regarded to a necessary condition of an occurrence of $P$.
Using this fact, we can find an approximate pattern from ESP tree.
Using this result, we can efficiently compute a smaller grammar compression closely related to AFP.

We review the algorithm for ESP presented in \cite{Takabatake16}.
This algorithm, referred to as ESP-comp, computes an SLP from an input string $S$.
The tasks of ESP-comp are to (i) partition $S$ into $s_1s_2\cdots s_\ell$ such that
$2\leq |s_i|\leq 3$ for each $1\leq i\leq \ell$,
(ii) if $|s_i|=2$, generate the production rule $X\to s_i$ and replace $s_i$ by $X$
(this subtree is referred to as a 2-tree), and 
if $|s_i|=3$, generate the production rule $Y\to AX$ and $X\to BC$ for $s_i=ABC$,
and replace $s_i$ by $Y$ (referred to as a 2-2-tree),
(iii) iterate this process until $S$ becomes a symbol.
Finally, the ESP-comp builds an SLP representing the string $S$.

We focus on how to determine the partition $S=s_1s_2\cdots s_\ell$.
A string of the form $a^r$ with $a\in\Sigma\cup V$ and $r\geq 2$ is called a repetition.
A repetition $S[i,j]$ is called to be \emph{maximal} if $S[i] \neq S[i-1],S[j+1]$.
First, $S$ is uniquely partitioned into the form
$w_1x_1w_2x_2\cdots w_kx_kw_{k+1}$  by its maximal repetitions, where each $x_i$ is 
a maximal repetition of a symbol in $\Sigma\cup V$, and each $w_i\in(\Sigma\cup V)^*$ contains no repetition.
Then, each $x_i$ is called type1, each $w_i$ of length at least $2\lg^*|S|$ is type2, and any remaining $w_i$ is type3.
If $|w_i|=1$, this symbol is attached to $x_{i-1}$ or $x_i$ with preference $x_{i-1}$ when both cases are possible.
Thus, if $|S|>2$, each $x_i$ and $w_i$ is longer than or equal to two.

Next, ESP-comp parses each substring $v$ depending on the type.
For type1 and type3 substrings, the algorithm performs the \emph{left aligned parsing} as follows. 
If $|v|$ is even, the algorithm builds 2-tree from $v[2j-1,2j]$ for each $j \in \{1,2,\ldots,|v|/2\}$;
otherwise, the algorithm builds a 2-tree from $v[2j-1,2j]$ for each $j \in \{1,2,\ldots,\lfloor(|v|-3)/2\rfloor\}$ and 
builds a 2-2-tree from the last trigram $v[|v|-2,|v|]$. 
If $v$ is type2, the algorithm further partitions it into short substrings of length two or three by the following
\mbox{\emph{alphabet reduction}}.

{\bf Alphabet reduction:} 
Given a type2 string $v$, consider $v[i]$ and $v[i-1]$ as binary integers.
Let $p$ be the position of the least significant bit of $v[i]\oplus v[i-1]$
and let $bit(p,v[i])$ be the bit of $v[i]$ at the $p$-th position.
Then, $L(v)[i]=2p+bit(p,v[i])$ is defined for any $i\geq 2$.
Because $v$ is repetition-free (\emph{i.e.}, type2), the label string $L(v)[2,|v|]$ is also type2.
Suppose that any symbol in $v$ is an integer in $\{0, \ldots, N\}$,
$L(v)[2,|v|]$ is a sequence of integers in $\{0, \ldots, 2 \lg N + 1\}^{|v|-1}$.
If we apply this procedure $\lg^*N$ times, 
then we get $L^*(v)[\lg^*N + 1, |v|]$ a sequence of integers in $\{0, \ldots, 5\}^{|v|-\lg^*N}$,
where $L^*(v)[1, \lg^*N]$ is not defined
\footnote{The number of iteration of alphabet reduction should not be changed arbitrarily according to each $v$, and so
$N$ is set in advance to be a sufficiently large integer, e.g. $N = O(|S|)$.}.
When $L^*(v)[i-1],L^*(v)[i],L^*(v)[i+1]$ are defined, 
$v[i]$ is called the \emph{landmark} if $L^*(v)[i]> \max\{L^*(v)[i-1],L^*(v)[i+1]\}$.

The iteration of alphabet reduction transforms $v$ into $L^*(v)$ such that any substring of $L^*(v)[\lg^*N + 1, |v|]$ of 
length at least $12$ contains at least one landmark because $L^*(v)[\lg^*N + 1, |v|]$ is also type2.
Using this characteristic, the algorithm ESP-comp determines the bigrams $v[i,i+1]$ to be replaced
for any landmark $v[i]$, where any two landmarks are not adjacent, and then the replacement is deterministic.
After replacing all landmarks, any remaining maximal substring $s$ is replaced by the left aligned parsing,
where if $|s|$ =1, it is attached to its left or right block.

We give an example of the edit sensitive parsing of an input string in Figure~\ref{fig:reduction}-(i) and (ii).
For type2 substring $v$ (Figure~\ref{fig:reduction}-(i)),
$v$ is parsed according to landmarks (here landmarks are determined by conducting alphabet reduction two times for simplicity of explanation).
Any other remaining substrings including type1 and type3 are parsed by the left aligned parsing (shown in Figure~\ref{fig:reduction}-(ii)).
In this picture, a dashed node denotes that it is an intermediate node in a 2-2-tree.
Originally, an ESP tree is a ternary tree in which each node has at most three children.
The intermediate node is introduced to represent ESP tree as a binary tree.

The following characteristics are well-known for ESP.
By Theorem~\ref{th:cormode1}, we can obtain the locally consistent parsing for $S$:
An iteration of ESP for $S$, for any substring $P$ of $S$
there exists an interval $[i,j]$ of length at least $|P|-O(\lg^*|S|)$ 
such that the substring $P[i,j]$ with each occurrence of $P$ 
is transformed into a same string.
Iterating this, the resulting ESP tree contains a large subtree for $P$
regardless of its occurrence, that expresses an approximation of $P$.
Theorem~\ref{th:cormode2} is clear by the definition of ESP.
Adopting such theorems, we derive our results in the following section.

\begin{theorem}\label{th:cormode1}(\cite{Cormode07})
For type2 substring $v$, whether $v[i]$ is a landmark or not is determined by only $v[i-O(\lg^*|S|),i+O(1)]$.
\end{theorem}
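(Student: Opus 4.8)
The plan is to trace the dependency chain in the alphabet reduction procedure and show that each application of $L(\cdot)$ at a given position depends only on a bounded-width window of the input to its left (and $O(1)$ to its right), so that after $\lg^*|S|$ iterations the total leftward reach is still $O(\lg^*|S|)$ while the rightward reach remains $O(1)$. Since whether $v[i]$ is a landmark is decided by comparing $L^*(v)[i]$ against its two immediate neighbors $L^*(v)[i-1]$ and $L^*(v)[i+1]$, it suffices to control how far into $v$ each of these three quantities reaches.

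First I would analyze a single step. By definition, $L(v)[i]=2p+\mathit{bit}(p,v[i])$ where $p$ is the least significant bit position of $v[i]\oplus v[i-1]$; hence $L(v)[i]$ is a function of $v[i]$ and $v[i-1]$ only. So one reduction step extends the leftward dependency by exactly one position and adds nothing to the right. Iterating, $L^{(k)}(v)[i]$ depends on $v[i-k,i]$, i.e.\ on a window of width $k+1$ whose right endpoint is $i$.

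Next I would compose the steps to handle the landmark test. After the full $\lg^*|S|$ iterations, $L^*(v)[i]=L^{(\lg^*|S|)}(v)[i]$ depends on $v[i-\lg^*|S|,\,i]$. The landmark condition additionally consults $L^*(v)[i+1]$, which by the same count depends on $v[i+1-\lg^*|S|,\,i+1]$, reaching one position to the right of $i$ and $\lg^*|S|-1$ positions to the left; and $L^*(v)[i-1]$ reaches no further right than $i-1$. Taking the union of these three windows, the landmark status of $v[i]$ is determined by $v[i-\lg^*|S|,\,i+1]$, which is precisely $v[i-O(\lg^*|S|),\,i+O(1)]$ as claimed. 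I would also note that all comparisons are among these finitely many $L^*$-values, so no further dependence is introduced.

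The main obstacle will be bookkeeping the right-hand boundary correctly: one must verify that the $+O(1)$ on the right is genuine and not hiding a cumulative drift, i.e.\ that the rightward reach does \emph{not} grow with the number of iterations. The point is that each $L(\cdot)$ step is ``backward-looking'' (position $i$ uses $i$ and $i-1$), so iteration only propagates dependence leftward; the single extra position to the right comes solely from the neighbor $v[i+1]$ used in the max-comparison and is independent of the iteration count. Making this asymmetry explicit is the crux of the argument.
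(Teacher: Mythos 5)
Your proposal is correct, and it is essentially the canonical argument: one reduction step is backward-looking ($L(v)[i]$ is a function of $v[i-1]$ and $v[i]$ alone), so $k$ iterations give dependence on $v[i-k,i]$, and the final max-comparison with the two neighbors widens the window to $v[i-1-\lg^*N,\,i+1]$, i.e.\ $v[i-O(\lg^*|S|),\,i+O(1)]$. Note, however, that the paper itself offers no proof to compare against: Theorem~\ref{th:cormode1} is imported verbatim from \cite{Cormode07}, so your write-up is in effect supplying the missing justification rather than paralleling an existing one. Given this paper's (simplified) definition of a landmark --- a strict local maximum of $L^*$ over the alphabet $\{0,\ldots,5\}$ --- your argument is complete, and your closing remark correctly identifies the crux, namely that the rightward reach does not accumulate across iterations. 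One caveat worth recording: in the original ESP of \cite{Cormode07} there are a few additional constant-alphabet rounds (reducing $\{0,\ldots,5\}$ to $\{0,1,2\}$) and a rule promoting certain local minima to landmarks; each of these consults both neighbors and hence adds $O(1)$ dependence on \emph{both} sides per round, but since the number of such rounds is constant this only perturbs the window by $O(1)$ and the stated bound $v[i-O(\lg^*|S|),i+O(1)]$ survives. Your proof as written is sound for the definitions actually used in this paper.
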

\begin{theorem}\label{th:cormode2}(\cite{Cormode07})
The height of ESP tree of $S$ is $O(\lg|S|)$.
\end{theorem}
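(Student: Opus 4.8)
The plan is to bound the number of ESP rounds and then observe that each round contributes only a constant number of levels to the tree. First I would fix attention on a single iteration of ESP-comp applied to a current string of length $\ell$. By construction the iteration partitions this string into blocks $s_1 s_2 \cdots s_k$ with $2 \le |s_i| \le 3$ for every $i$; the type1/type2/type3 case analysis, together with the rule that a leftover symbol of length one is attached to an adjacent block, guarantees that \emph{every} block has length at least two. Consequently the number of blocks $k$, which equals the length of the string handed to the next iteration, is at most $\lceil \ell/2 \rceil$.

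Next I would iterate this contraction. Writing $\ell_0 = |S|$ and $\ell_{t+1} \le \lceil \ell_t/2 \rceil$ for the length after $t$ rounds, the recurrence $\ell_{t+1} \le \ell_t/2 + 1/2$ gives by a routine induction $\ell_t \le |S|/2^t + O(1)$, so the string contracts to a single symbol after $r = O(\lg|S|)$ rounds. This is the only place where the factor-two length reduction is used, and it is precisely what forces a logarithmic number of rounds.

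Finally I would relate rounds to tree height. Each block is replaced either by a 2-tree (a node with two children, contributing one level) or by a 2-2-tree $Y \to AX$, $X \to BC$ (contributing two levels, because of the intermediate node $X$). The root of a round-$t$ block-tree becomes a leaf of a round-$(t+1)$ block-tree, so the per-round heights stack additively; hence a single round increases the height of the growing parse tree by at most $2$. Combining this with the $O(\lg|S|)$ bound on the number of rounds yields a total height of at most $2r = O(\lg|S|)$, which is the claim.

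I do not expect a genuine obstacle here, since the argument is a straightforward contraction count; the only points requiring care are (i) verifying that the block-length lower bound of two really holds in every type and after the single-symbol attachment rule, so that the factor-two reduction is never violated, and (ii) accounting for the extra level introduced by the intermediate nodes of 2-2-trees, which affects only the constant in front of $\lg|S|$ and not the asymptotics.
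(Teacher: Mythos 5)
Your proof is correct, and it is precisely the standard contraction argument that the paper treats as immediate: the paper gives no explicit proof, citing \cite{Cormode07} and remarking that the bound is ``clear by the definition of ESP,'' which is exactly your observation that every block has length at least two (so each round at least halves the string) and each round adds only $O(1)$ levels. Your two points of care --- the block-length lower bound in all types and the extra level from intermediate nodes of 2-2-trees --- are the right ones, and both check out against the paper's parsing rules.
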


\begin{figure}[t]
\begin{center}
\includegraphics[width=0.7\textwidth]{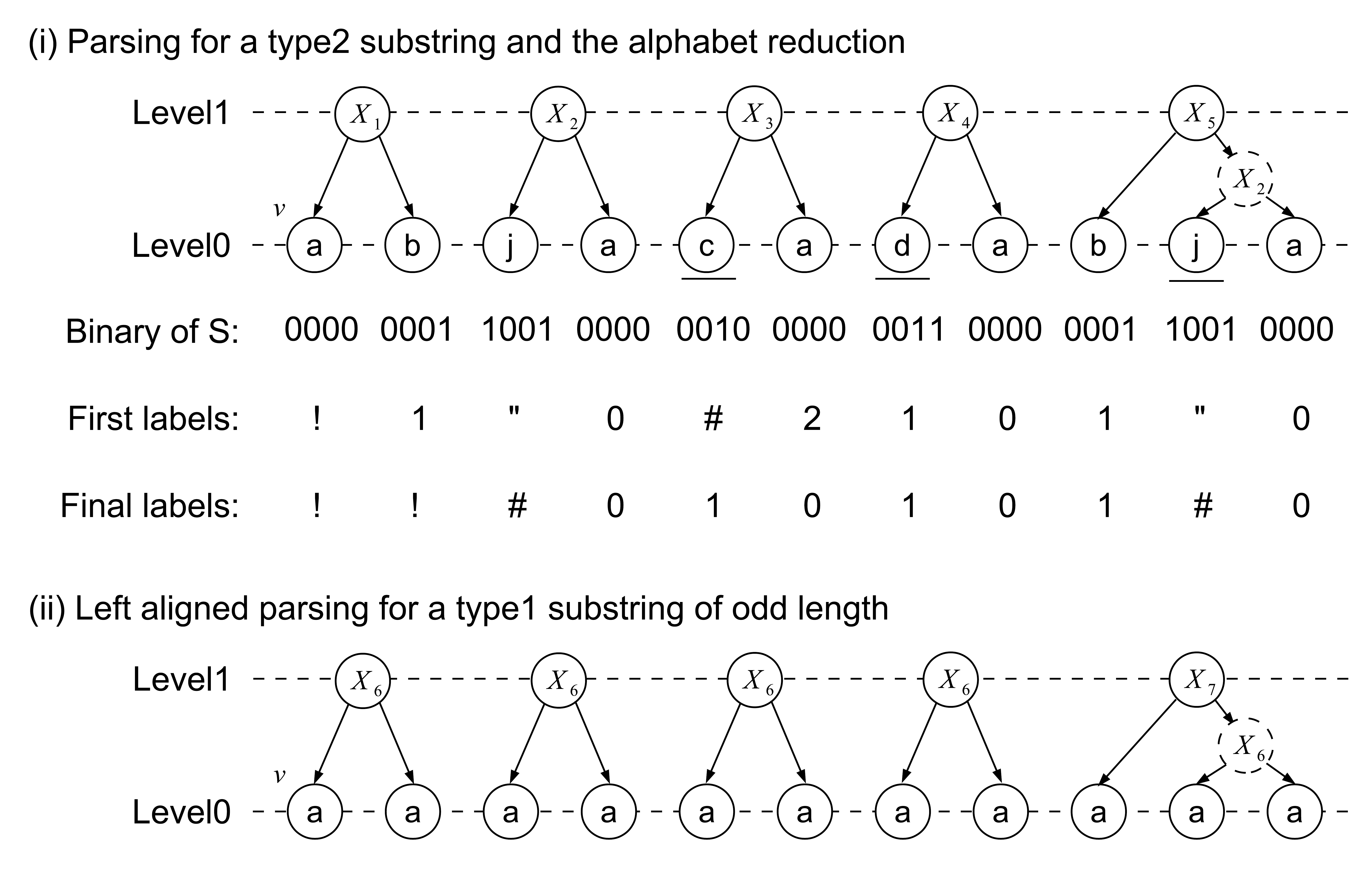}
\end{center}
\vspace{-0.6cm}
\caption{The edit sensitive parsing. In (i), an underlined $v[i]$ means a landmark, and $p\geq 0$.
In (i) and (ii), a dashed node is corresponding to the intermediate node in a 2-2-tree.}\vspace{-12pt}
\label{fig:reduction}
\end{figure}

\subsection{Succinct Encoding}
\begin{figure*}[t]
\begin{center}
\includegraphics[width=0.98\textwidth]{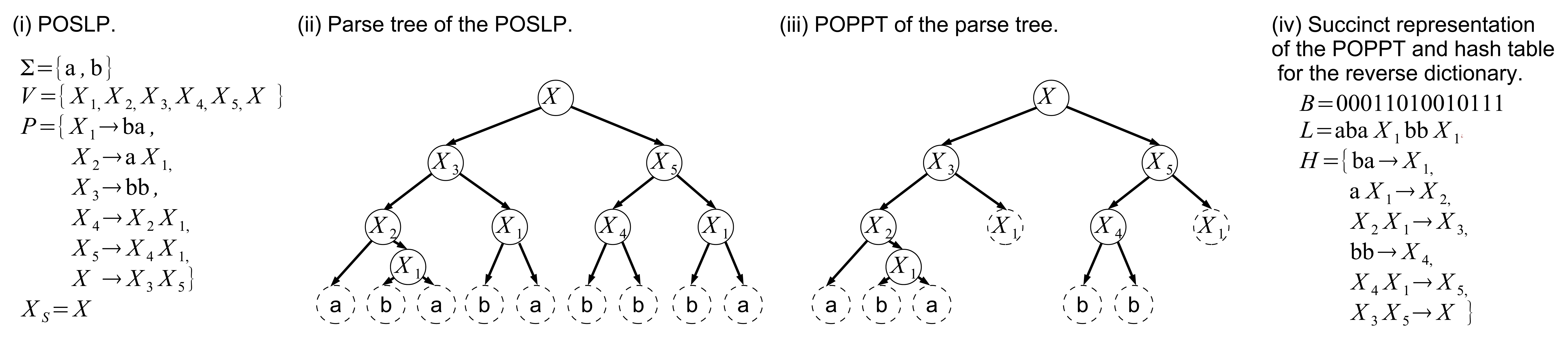}
\end{center}
\vspace{-0.6cm}
\caption{Example of post order SLP (POSLP), parse tree, post order partial parse tree (POPPT), and succinct representation of POPPT.}
\label{fig:pslp}
\end{figure*}

\cite{Rytter03} defined a partial parse tree as a binary tree built by traversing a parse tree in a depth-first manner 
and pruning out all the descendants under every node of a nonterminal symbol appearing before.
\cite{Maruyama2012} introduced the post-order SLP (POSLP) and post-order partial parse tree (POPPT) as follows.

\begin{definition}[POSLP and POPPT]\label{def:pslp}
A post-order partial parse tree is a partial parse tree whose internal nodes have post-order variables.
A post-order SLP is an SLP whose partial parse tree is a post-order partial parse tree.
\end{definition}

For a POSLP of $n$ variables, the number of nodes in the POPPT is $2n+1$ because 
the numbers of internal nodes and leaves are $n$ and $n+1$, respectively.
Figure~\ref{fig:pslp}-(i)(iii) shows an example of POSLP and POPPT, respectively.
The resulting POPPT (iii) has internal nodes consisting of post-order variables.

\cite{Maruyama2013} proposed FOLCA, the fully online algorithm for computing succinct POSLP $(B,L)$.
$B$ is the bit string obtained by traversing POPPT in post-order, and putting $'0'$ if a node is a leaf and $'1'$ otherwise.
The last bit $'1'$ in $B$ represents the super root.
$L$ is the sequence of leaves of the POPPT.
The dynamic sequences $B$ and $L$ are encoded using the succinct data structure by \cite{NavSad12}.
Then, the following result was shown.

\begin{theorem}[\cite{Maruyama2013}]\cite{}\label{FOLCA}
\label{th:maruyama}
The POSLP of $n$ variables and $\sigma$ alphabet symbols supporting
the phrase and reverse dictionaries can be constructed 
in $O(\frac{|S|\lg n}{\alpha \lg\lg n})$ expected time using 
$(1+\alpha)n\lg (n+\sigma) + n(3+\lg(\alpha n))$ bits memory where
$\alpha\in (0,1)$ is the load factor of a hash table.
\end{theorem}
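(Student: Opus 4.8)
The plan is to exhibit the FOLCA algorithm explicitly, prove that it outputs a valid POSLP deriving $S$, and then bound its time and space. First I would fix the online schedule: ESP-comp is organized into $O(\lg|S|)$ levels (Theorem~\ref{th:cormode2}), where level $t$ reads a string $S_t$ and emits $S_{t+1}$ with $|S_{t+1}|\le|S_t|$. Rather than materialize each $S_t$, I would keep one bounded queue per level and push each freshly read input symbol upward through the levels. The key enabling fact is the local consistency of the landmark rule (Theorem~\ref{th:cormode1}): whether a position is a landmark depends only on a window of $O(\lg^*|S|)$ preceding symbols, so a queue of that length at each level suffices to decide every block boundary, and the total auxiliary buffer space is negligible compared with the output.

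Next I would argue correctness. Determinism of the SLP is enforced by the reverse dictionary: before creating a rule $X_k\to X_iX_j$ we query $D^{-1}$ on the pair $(X_i,X_j)$, reusing the existing variable when present and allocating a new one otherwise; since every pair maps to a unique variable, the grammar derives $S$ uniquely and has no loop. The post-order property comes for free from the schedule: because ESP completes the children of a node before the node itself and proceeds left to right, the variables are finalized exactly in the post-order of the POPPT, so appending a $1$ to $B$ (internal node) or a $0$ together with its label to $L$ (pruned leaf) at each completion maintains the invariant that $B$ is the post-order bit sequence and $L$ the leaf-label sequence.

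Then I would do the accounting. The total number of node completions over all levels is $O(|S|)$ by the geometric shrinkage $|S_{t+1}|\le|S_t|$, giving $O(|S|)$ reverse-dictionary operations. Implementing $D^{-1}$ as an open-addressing hash table with $(1+\alpha)n$ slots for its $n$ phrases yields load factor $1/(1+\alpha)$ and hence $O(1/\alpha)$ expected probes per query; each probe must reconstruct a candidate phrase by navigating $(B,L)$, which costs $O(\lg n/\lg\lg n)$ with the dynamic rank/select structure of \cite{NavSad12}. Multiplying gives $O(\frac{|S|\lg n}{\alpha\lg\lg n})$ expected time. For space, the hash table occupies $(1+\alpha)n\lg(n+\sigma)$ bits, while the succinct POPPT contributes the remaining $n(3+\lg(\alpha n))$ bits: the $2n+1$ bits of $B$ plus its $o(n)$-bit rank/select index account for the $3n$ term, and the encoded leaf labels account for the rest.

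I expect the main obstacle to be the interaction between the online schedule and the succinct dynamic structures. Proving that bounded per-level queues never lose a block boundary requires a careful induction using Theorem~\ref{th:cormode1} across levels, because a decision deferred at one level can propagate upward; and the expected-time claim is delicate since each hash probe is not $O(1)$ but a full navigation of the growing structure, so the $O(1/\alpha)$ probe bound and the $O(\lg n/\lg\lg n)$ per-probe cost must be shown to compose without hidden amortization blow-up as $n$ grows during construction.
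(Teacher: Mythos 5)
This theorem is not proved in the paper at all: it is imported verbatim from \cite{Maruyama2013} (the FOLCA result) and used as a black box, so there is no in-paper proof to compare your attempt against. That said, your reconstruction does follow the general architecture of the cited FOLCA argument --- per-level bounded queues justified by the locality of landmarks (Theorem~\ref{th:cormode1}), incremental post-order emission of the POPPT as $(B,L)$, a hash-table reverse dictionary giving $O(1/\alpha)$ expected probes, and the $O(\lg n/\lg\lg n)$ per-operation cost of the dynamic succinct sequences of \cite{NavSad12}, which multiply out to the claimed $O(\frac{|S|\lg n}{\alpha\lg\lg n})$ expected time.

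However, your space accounting contains a concrete error. You assign the full $(1+\alpha)n\lg(n+\sigma)$ bits to the hash table and claim the leaf labels $L$ fit inside the residual $n\lg(\alpha n)$ term. That is impossible: $L$ consists of $n+1$ symbols drawn from an alphabet of size $n+\sigma$, hence needs roughly $n\lg(n+\sigma)$ bits, while $n\lg(\alpha n) < n\lg(n+\sigma)$ since $\alpha<1$. In the correct accounting, the $n\lg(n+\sigma)$ portion of $(1+\alpha)n\lg(n+\sigma)$ pays for $L$, the $\alpha n\lg(n+\sigma)$ portion together with $n\lg(\alpha n)$ pays for the reverse dictionary (bucket array plus per-variable chaining), and the $3n$ covers $B$ (which is $2n+1$ bits) and its rank/select overhead. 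A smaller slip: the inequality you invoke for the work bound, $|S_{t+1}|\le|S_t|$, does not give geometric shrinkage; you need $|S_{t+1}|\le\lceil|S_t|/2\rceil$, which holds because every ESP block has length at least two. Finally, the two difficulties you flag at the end --- that deferred landmark decisions propagate across levels, and that each hash probe costs a full navigation of the growing dynamic structure --- are precisely the technical content of the cited proof and are left unresolved here, so as it stands your text is an outline of the argument in \cite{Maruyama2013} rather than a self-contained proof.
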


In this paper, we improve the approximation ratio of the pattern discovery problem
and show the ratio is practically sufficient (nearly optimal) for real data.
The implementation of the algorithm is realized by the modification of FOLCA 
with novel data structures.

\section{Algorithm}

\begin{algorithm}[htbp]
  \caption{ to compute a core $X$ of any frequent $P$ in $S$. 
	$T$: POSLP representing the ESP tree of $S$,
	$B$: a succinct representation of skeleton of $T$,
	$L$: a sequence of leaves of $T$,
	$FB$: a bit vector storing $FB[i]=1$ iff $\mathit{freq}_T(X_i)\geq 2$,	$D^{-1}$: the reverse dictionary for production rules,
	$q_k$: a queue in $k$-th level,
	and let $u\in \max\{5, \lg^*|S|\}$.} 
  \label{algo}
{\footnotesize 
\begin{algorithmic}[1]
\Function{ComputeAFP}{$S$}
\State $B:=\emptyset; L:= \emptyset; FB:=\emptyset;$ initialize queues $q_k$
\For{$i:=1,2,\ldots,|S|$}
\State {\sc BuildESPTree}($\{S[i],0,0,0,0\},q_1$)
\EndFor
\EndFunction
\Function{BuildESPTree}{$X,q_k$} \Comment{$X$ is a set $\{s,ib,\ell_1,\ell_2,\ell_{\lg^*|S|}\}$ where $s$ is a symbol, $ib$ is 1 if $s$ is an internal node otherwise 0 and $\ell_i (i \in \{1,2,\lg^*|S|\})$ is a label applied $i$-th alphabet reduction for $s$.}
\State $q_k.enqueue(X)$
\State compute $q_k[q_k.length()].\ell_i(i \in \{1,2,\lg^*|S|\})$ 
\If{$q_k.length() = u$}
\If{{\sc Is2Tree}($q_k$)}
\State $Y:=$ {\sc Update}($q_k[u-1],q_k[u]$)  
\State $q_k.dequeue(); q_k.dequeue()$
\State {\sc BuildESPTree}($Y,q_{k+1}$)
\EndIf
\ElsIf{$q_k.length() = u+1$}
\State $Y:=$ {\sc Update}($q_k[u],q_k[u+1]$); $Z:=$ {\sc Update}($q_k[u-1],Y$)
\State $q_k.dequeue(); q_k.dequeue(); q_k.dequeue()$
\State {\sc BuildESPTree}($Z,q_{k+1}$)
\EndIf
\EndFunction
\Function{Is2Tree}{$q_k$}
\If{($q_k[u-4].s= q_k[u-3].s) \& (q_k[u-3].s \neq q_k[u-2].s)$}
\State \Return 0
\ElsIf{($q_k[u-3].s\neq q_k[u-2].s) \& (q_k[u-2].s = q_k[u-1].s)$}
\State \Return 0
\ElsIf{$(q_k[u-3].\ell_{\lg^*|S|} < q_k[u-2].\ell_{\lg^*|S|}) \& (q_k[u-2].\ell_{\lg^*|S|} > q_k[u-1].\ell_{\lg^*|S|})$}
\State \Return 0
\Else
\State \Return 1
\EndIf
\EndFunction
\Function{Update}{$X,Y$}
\State $z:= D^{-1}(X.s,Y.s)$ 
\If {$z$ is a new symbol}
\State {\sc UpdateLeaf}($X$); {\sc UpdateLeaf}($Y$)
\State $B.push\_back(1); FB.push\_back(0)$
\State \Return $\{z,1,0,0,0\}$ 
\Else
\State {\sc GetAFPNode}($z$)
\State \Return $\{z,0,0,0,0\}$ 
\EndIf
\EndFunction
\Function{UpdateLeaf}{$X$}
\If {$X.ib=0$}
\State $L.push\_back(X.s); B.push\_back(0)$
\EndIf
\EndFunction
\Function{GetAFPNode}{$X_i$}
\If{$FB[i] = 0$}
\State $FB[i] := 1$
\State Output $X_i$
\EndIf
\EndFunction
\end{algorithmic}
}
\end{algorithm}

In this section, we propose a modified FOLCA for AFP with saving-space.
We show the improved lower bound of the size of extracted core
as well as time and space complexities.
We first summarize the proposed algorithm.
Let $S_i$ $(i=0,1,\ldots,\lceil \lg |S| \rceil)$ be the resulting string of the $i$-th iteration of ESP, where $S_0=S$.
The algorithm simulates the parsing of ESP using a queue $q_i$ for each level $i$.
The queue $q_i$ stores a substring $S_i$ of length at most $O(\lg^*|S|)$ in a FIFO manner.
At the beginning, input symbols are enqueued to $q_0$.
If a prefix of $S$ is a repetition $a^+$, it is parsed in a left-aligned manner, and a production rule such as $A\to aa$ is generated.
$a^+$ is dequeued from $q_0$, and the resulting sequence of $A$s is enqueued to $q_1$.
Otherwise, at most $O(\lg^*|S|)$ symbols are enqueued to $q_0$, and 
$q_0[0,i-1]$ is parsed in a left-aligned manner, where $q_0[i]$ is the leftmost landmark.
By Theorem~\ref{th:cormode1}, there is at least one landmark in $q_0$ of length $O(\lg^*|S|)$.
Then, the symbols in $q_0[0,i-1]$ are dequeued from $q_0$, and the generated symbols are enqueued to $q_1$.
These computations are done in each level.
When a prefix of $S$ is enqueued, a sequence of production rules is generated such that
it is encoded by a POSLP $T$ encoded by $(B,L)$, where $B$ is a bit sequence that represents the skeleton of $T$,
and $L$ is the sequence of the leaves of $T$.
The pseudo code is shown in Algorithm~\ref{algo}.

We next show that the ESP tree of $S$ contains a sufficiently large core for any substring $P$
that guarantees the approximation ratio of our algorithm.
This result is an improvement of the lower bound shown by \cite{Nakahara13}.

\begin{theorem}\label{th:approx}
Let $T$ be the ESP tree of a string $S$ and $P$ be a substring of $S$.
There exists a core of $P$ that derives a string of length $\Omega(\frac{|P|}{\lg^*|S|\lg|P|})$.
\end{theorem}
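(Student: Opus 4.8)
The plan is to track a single arbitrary occurrence of $P$ through the iterated ESP parsing and show that after $O(\lg|P|)$ rounds the portion of the tree built over $P$ has collapsed to a small number of symbols, while remaining \emph{consistent} across all occurrences of $P$. The key tool is the locally consistent parsing guaranteed by Theorem~\ref{th:cormode1}: when ESP partitions $P$ as a substring of $S$, the decision whether $P[i]$ is a landmark depends only on the window $P[i-O(\lg^*|S|),\,i+O(1)]$. Consequently, for every occurrence of $P$ the interior of $P$ is parsed identically; only a prefix and suffix of total length $O(\lg^*|S|)$ can be parsed differently because their landmark decisions depend on symbols lying outside $P$.

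First I would make this precise at one level of ESP. Writing $P_0=P$ and letting $P_1$ be the string of symbols produced over the interior of $P$ in one ESP round, I claim there is a central interval $[\ell,r]$ with $r-\ell \ge |P| - O(\lg^*|S|)$ such that $P[\ell,r]$ is transformed into the same string regardless of which occurrence of $P$ we look at, and moreover $|P_1| \le |P|/2 + O(1)$ since each new symbol covers a bigram or trigram. So one round shrinks the consistently-parsed core by a multiplicative factor of roughly $2$ while losing an additive $O(\lg^*|S|)$ from the ends. Iterating this, after $t$ rounds the consistent core has length at least
\[
\frac{|P|}{2^t} - c\,\lg^*|S|\sum_{j=0}^{t-1}\frac{1}{2^j} \;\ge\; \frac{|P|}{2^t} - 2c\,\lg^*|S|
\]
for an absolute constant $c$. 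The next step is to choose $t$ as large as possible while keeping this quantity positive and bounded below: taking $t = \lg|P| - \Theta(\lg(\lg^*|S|))$ keeps a consistent core of length $\Theta(\lg^*|S|)$ alive, and by Theorem~\ref{th:cormode2} we have not exhausted the height budget since $t = O(\lg|S|)$. At that level a single surviving symbol $X$ sits over this core; since $X$'s descendants lie entirely inside every occurrence of $P$, it derives a substring of $P$, and because the parsing over the core was identical across occurrences, $X$ (or an equal-valued variable) accompanies every occurrence. Hence $X$ is a core of $P$ in the sense of Problem~\ref{prob:1}, and $|\mathit{val}(X)| = \Omega(|P|/2^t) = \Omega(|P|/(\lg^*|S|\lg|P|))$, which is exactly the claimed bound.

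The main obstacle I anticipate is controlling the additive boundary loss rigorously and uniformly across levels: the $O(\lg^*|S|)$ of Theorem~\ref{th:cormode1} refers to symbols at the current level, but as the alphabet changes between rounds one must verify that the window of dependence stays $O(\lg^*|S|)$ and that the ``eaten'' boundary at each level, once pulled back to original positions in $S$, telescopes into a single $O(\lg^*|S|)$ term rather than accumulating a $\lg^*|S|$ factor per level. Handling the type1 repetition segments needs a separate but easier argument, since a maximal run $a^r$ inside $P$ is parsed left-aligned and deterministically, so its interior is automatically consistent; the only care is at run boundaries where a length-one leftover may be attached to a neighbor, contributing to the same $O(\lg^*|S|)$ boundary slack. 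Once the telescoping of the boundary loss is established, the geometric shrinkage gives the $\lg|P|$ factor and the per-level slack gives the $\lg^*|S|$ factor, completing the estimate.
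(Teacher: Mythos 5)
Your vertical-tracking argument has a genuine gap that cannot be patched without changing the approach: the shrinkage arithmetic conflates the two directions of the block-size bound. ESP blocks have size 2 \emph{or} 3, so in passing from level $t$ to level $t+1$ the number of consistently-parsed symbols you can guarantee is only $m_{t+1}\ge (m_t-c\lg^*|S|)/3$ (trigram blocks are the worst case), while the derived length of a single level-$t$ symbol is only guaranteed to be at least $2^t$. Your recurrence uses the factor $2$ on both sides, i.e.\ the \emph{upper} bound for the count and the \emph{lower} bound for the length. Relatedly, the telescoping you hope for in your last paragraph is false in character units: the $O(\lg^*|S|)$ symbols discarded at level $j$ can each derive up to $3^j$ characters, so the total boundary loss grows like $\lg^*|S|\cdot 3^t$ rather than staying $O(\lg^*|S|)$. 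Running your scheme with the correct constants, the consistent core only survives up to level $t\approx\log_3\bigl(|P|/\lg^*|S|\bigr)$, and the core you can extract has length about $2^t=(|P|/\lg^*|S|)^{\log_3 2}$, which is polynomially weaker than the claimed $\Omega(|P|/(\lg^*|S|\lg|P|))$. A symptom of the trouble is your final line: with $t=\lg|P|-\Theta(\lg\lg^*|S|)$ one has $|P|/2^t=\mathrm{poly}(\lg^*|S|)$, so the asserted equality $\Omega(|P|/2^t)=\Omega(|P|/(\lg^*|S|\lg|P|))$ does not hold; the $\lg|P|$ in the target bound never actually arises from your mechanism.

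The missing idea, which is how the paper proceeds, is to \emph{keep} the boundary material instead of discarding it: the prefix of length $O(\lg^*|S|)$ (or the maximal repetition prefix) whose next-level parsing may be inconsistent is itself a string of symbols occurring identically above every occurrence of $P$, hence every one of its symbols is already a core. Freezing one such piece per level yields a decomposition $\mathit{val}(Q_1)\mathit{val}(Q_2)\cdots \mathit{val}(Q_k)=P$ with $k\le\lceil\lg|P|\rceil$, where each $Q_i$ is either a string of $O(\lg^*|S|)$ symbols at its level or a repetition $c_i^+$. Pigeonholing over the $k$ pieces gives some $Q_j$ with $|\mathit{val}(Q_j)|\ge|P|/\lg|P|$ (this is where the $\lg|P|$ factor genuinely comes from), and then pigeonholing over the $O(\lg^*|S|)$ symbols inside $Q_j$ gives a single core of length $\Omega(|\mathit{val}(Q_j)|/\lg^*|S|)$; the repetition case needs its own argument (the paper shows via left-aligned parsing that some node covers at least one fifth of the run), which is more than the ``automatic consistency'' you allude to. Note that this double pigeonhole never invokes per-level size bounds of the form $2^t\le|\mathit{val}(X)|\le 3^t$, which is exactly what makes it immune to the 2-versus-3 mismatch that breaks your route.
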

\begin{proof}
If a prefix of $P$ is a repetition, let $Q_1$ be the maximal one and $Q'_1$ be the remaining suffix of $P$.
The parsing of $Q'_1$ is not affected by the string preceeding $Q'_1$, and then
the parsing of $Q'_1$ inside $P$ is identical regardless of any occurrence of $P$. 
Otherwise, by Theorem~\ref{th:cormode1}, we can partition $P=Q_1Q'_1$ such that
$|Q_1|=O(\lg^*|S|)$, and $Q'_1$ is also identically parsed inside $P$.
Let $P_1$ be the common substring in $S_1$ deriving $Q'_1$.
Then, for each case, $Q_1P_1$ is a sequence of cores of $P$.
Iterating this process for $P_1$ at most $k (\leq \lceil \lg |P| \rceil)$ times, we can get a sequence $Q_1Q_2\cdots Q_k$ of cores 
such that $Q_i$ is either a repetition of the form $Q_i=c_i^+$ $(c_i\in\Sigma\cup V)$ or a string of length $O(\lg^*|S|)$.

We show that for any $1 \leq i \leq k$, there exists a core $X_i$ in $Q_i$ with $|\mathit{val}(X_i)| = \Omega(\frac{|\mathit{val}(Q_i)|}{\lg^*|S|})$.
If the length of $Q_i$ is $O(\lg^*|S|)$, the claim is immediate from the pigeonhole principle.
Otherwise $Q_i=c_i^+$.
Because any maximal repetition is parsed in a left-aligned manner, 
a type2 sequence of bigrams $c_i^2$ is created over $Q_i$ (except for the last one, which may be a 2-2-tree deriving $c_i^3$).
Iterating the parsing on the type2 sequence, we will get a large complete balanced binary tree of $c_i$.
Assuming that the largest one covers $2^{h}$ $c_i$'s in $Q_i$,
we can see that $Q_i$ contains $c_i$'s less than $5 \cdot 2^{h}$, namely, there is a node covering at least one-fifth of the $c_i$'s in $Q_i$.
The maximum length of $Q_i$ is achieved when $Q_i$ is parsed into $AB C_{h-1} \cdots C_{0}$,
where $A$ contains $2^{h} - 1$ $c_i$'s, $B$ contains $2^{h}$ $c_i$'s, and for any $0 \leq h' < h$ $C_{h'}$ contains $3 \cdot 2^{h'}$ $c_i$'s.
$A$ and its preceeding character $c \neq c_i$ (that must be the first character in the whole string)
compose a node having $2^{h}$ characters, $B$ composes the largest complete binary tree with $2^{h}$ $c_i$'s,
and for any $0 \leq h' \leq h$ $C_{h'}$ composes a 2-2-tree over three complete binary trees with $2^{h'}$ $c_i$'s.
Note that adding another $c_i$ to the $Q_i$ results in creating a 2-2-tree over three complete binary trees with $2^{h}$ $c_i$'s
in which we have a complete binary tree with $2^{h+1}$ $c_i$'s, and thus, 
the maximum number of $c_i$'s in $Q_i$ is $2^{h} - 1 + 2^{h} + \sum_{h' = 0}^{h-1} 3 \cdot 2^{h'} < 5 \cdot 2^{h}$.
Therefore, there exists a variable $X_i$ in $Q_i$ with $|\mathit{val}(X_i)| = \Omega(\frac{|\mathit{val}(Q_i)|}{\lg^*|S|})$.

Because there is at least one $Q_j$ such that $|\mathit{val}(Q_j)| \geq |P| / k \geq |P| / \lg |P|$,
there exists a core of $P$ that derives a string of length $\Omega(\frac{|\mathit{val}(Q_j)|}{\lg^*|S|}) = \Omega(\frac{|P|}{\lg^*|S| \lg|P|})$.
\end{proof}

\begin{theorem}\label{th:freq}
Algorithm~\ref{algo} approximates the problem of AFP with the ratio $\Omega(\frac{1}{\lg^*|S|\lg|P|})$ 
in $O(\frac{|S|\lg n}{\alpha\lg\lg n})$ time and $O(n+\lg|S|)$ space.
\end{theorem}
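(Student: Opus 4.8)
The plan is to split the statement into three independent claims — the approximation ratio, the running time, and the working space — handling the ratio by reduction to Theorem~\ref{th:approx} and reading off the time and space bounds from the FOLCA construction of Theorem~\ref{th:maruyama} together with an accounting of the extra bookkeeping that Algorithm~\ref{algo} adds on top of FOLCA.

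First I would argue that Algorithm~\ref{algo} builds exactly the ESP tree of $S$, encoded as a POSLP $(B,L)$, so that Theorem~\ref{th:approx} applies verbatim to the tree $T$ that the algorithm outputs. The crux is that the streaming queue discipline reproduces the offline parse. Here I would invoke Theorem~\ref{th:cormode1}: since whether $v[i]$ is a landmark depends only on the window $v[i-O(\lg^*|S|),i+O(1)]$, a queue of length $u=\max\{5,\lg^*|S|\}$ retains enough context to commit to a parsing decision for its front symbols without seeing the remainder of the level-$k$ string. I would then verify that {\sc Is2Tree} encodes precisely the three situations in which a bigram must not be cut — being interior to a maximal repetition (the two symbol-equality tests) and sitting at a landmark peak (the $\ell_{\lg^*|S|}$ comparison) — so that the length-$u$ and length-$(u+1)$ branches of {\sc BuildESPTree} emit a 2-tree or a 2-2-tree in agreement with the left-aligned parsing and the landmark replacement. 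Establishing this faithful-simulation claim is the step I expect to be the main obstacle, since it requires a careful case analysis matching the queue-length conditions against the type1/type2/type3 rules, including the treatment of length-one blocks attached to a neighboring block.

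Granting the simulation, the approximation bound is short. For a frequent $P$ (so $\mathit{freq_S(P)}\geq 2$), Theorem~\ref{th:approx} supplies a core $X$ of $P$ in $T$ with $|\mathit{val}(X)|=\Omega(|P|/(\lg^*|S|\lg|P|))$. By the definition of a core in Problem~\ref{prob:1}, every occurrence of $P$ carries an occurrence of $X$ deriving a subinterval, so the $\geq 2$ distinct occurrences of $P$ yield $\geq 2$ occurrences of $X$ in $T$, i.e.\ $\mathit{freq}_T(X)\geq 2$. When the production rule for $X$ is generated a second time, {\sc Update} finds it already present via $D^{-1}$ and calls {\sc GetAFPNode}, which sets $FB$ and outputs $X$; hence $X$ is reported and approximates $P$ with ratio $|\mathit{val}(X)|/|P|=\Omega(1/(\lg^*|S|\lg|P|))$, as required.

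Finally I would read off the complexities. The POSLP construction, the phrase and reverse dictionaries, and the incremental maintenance of $(B,L)$ are exactly FOLCA, so Theorem~\ref{th:maruyama} gives the $O(|S|\lg n/(\alpha\lg\lg n))$ time; computing the alphabet-reduction labels $\ell_1,\ell_2,\ell_{\lg^*|S|}$ and the $FB$ updates add only $O(1)$ amortized work per generated symbol and do not affect this bound. For space, the succinct POSLP together with the one-bit-per-variable vector $FB$ occupies $O(n)$ space, while the per-level queues $q_k$ each hold $O(u)=O(\lg^*|S|)$ symbols across the $O(\lg|S|)$ levels guaranteed by Theorem~\ref{th:cormode2}; treating $\lg^*|S|$ as effectively constant, this contributes $O(\lg|S|)$, giving the claimed $O(n+\lg|S|)$ total.
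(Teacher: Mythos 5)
Your overall structure --- reducing the approximation ratio to Theorem~\ref{th:approx}, the running time to Theorem~\ref{th:maruyama}, and accounting separately for the per-level queues and the bit vector $FB$ --- is exactly the paper's, and your treatment of the ratio and the time bound is sound. (The paper, like you, asserts rather than proves the faithful-simulation claim, so you are not behind it there; your description of the reporting mechanism via $D^{-1}$ and \textsc{GetAFPNode} is in fact more explicit than the paper's one-line remark that $\mathit{freq}_T(X_i)\geq 2$ can be recorded in $n$ bits using first occurrences.)

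The genuine gap is in the space bound. Your accounting gives $O(\lg|S|)$ levels times $O(\lg^*|S|)$ symbols per queue, i.e.\ $O(\lg^*|S|\lg|S|)$ working space for the parsing, and you pass from this to the claimed $O(\lg|S|)$ only by ``treating $\lg^*|S|$ as effectively constant.'' That is not a proof: $\lg^*|S|$ is unbounded, so what you have actually established is $O(n+\lg^*|S|\lg|S|)$, which is strictly weaker than the stated $O(n+\lg|S|)$. The paper closes exactly this gap with an idea your proposal is missing: after two rounds of alphabet reduction, every queued symbol can be represented, for the purpose of landmark determination, by a label of $O(\lg\lg\lg|S|)$ bits, and since landmark status depends only on $O(\lg^*|S|)$ consecutive such labels (Theorem~\ref{th:cormode1}), the information needed to drive the parsing decisions at each level fits in $\lg^*|S|\lg\lg\lg|S| = o(\lg|S|)$ bits, i.e.\ $O(1)$ machine words per level. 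With that reduction each queue costs $O(1)$ space, the parsing as a whole costs $O(\lg|S|)$, and the stated bound follows; without it, your argument proves a weaker theorem than the one claimed.
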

\begin{proof}
The algorithm simulates the ESP of $S$ using queues $q_i$ $(i=0,1,\ldots,|S|)$;
$q_i$ stores a substring of $S_i$ to determine whether $S_i[j]$ is a landmark or not.
By Theorem~\ref{th:cormode1}, the space for each $q_i$ is $O(\lg^*|S|)$.
We can reduce this space to $O(1)$ using a table of size at most $\lg^*|S|\lg\lg\lg|S|$ bits as follows.
Applying two iterations of alphabet reduction, each symbol $A$ is transformed into a label $L_A$ of size at most $\lg\lg\lg|S|$ bits.
Whether the $A$ is a landmark or not depends on its consecutive $O(\lg^*|S|)$ neighbours.
Thus, the size of a table storing a 1-bit answer is at most $\lg^*|S|\lg\lg\lg|S|$ bits.
It follows that the space for parsing $S$ is $O(\lg|S|)$.
On the other hand, by Theorem~\ref{th:maruyama}, the POSLP $T$ of $S$ is computable in $O(\frac{|S|\lg n}{\alpha\lg\lg n})$ time.
By Theorem~\ref{th:approx}, for each frequent $P$, 
$T$ contains at least one core $X$ of $P$ satisfying $|\mathit{val}(X)|=\Omega(\frac{|P|}{\lg^*|S|\lg|P|})$.
Thus, finding all variables $X$ appearing at least twice in $T$ approximates this problem with the lower bound.
Whether $\mathit{freq}_T(X_i)\geq 2$ can be stored in $n$ bits for all $i$ because 
an internal node $i$ of $T$ denotes the position of the first occurrence of $X_i$.
Therefore, we obtain the complexities and approximation ratio.
\end{proof}

\section{Experimental Results}

We evaluate the performance of the proposed approximation algorithm 
on one core of a quad-core Intel Xeon Processor E5540 (2.53GHz) machine with 144GB memory. 
We adopt a lightweight version of the fully-online ESP, called FOLCA~\cite{Maruyama2013},
as a subroutine for the grammar compression.

We use several standard benchmarks from a text collection\footnote{
http://pizzachili.dcc.uchile.cl/repcorpus.html}, which is detailed in Table~\ref{tab:txtinfo}.
We choose texts with a high and small amount of repetitions.
For these texts, we examine the practical approximation ratio of the algorithm as follows.
For each text $S$, we obtained the set of frequent substrings by the compressed suffix array (SA) by \cite{Sadakane00},
and we selected the top-100 longest patterns so that any two $P$ and $Q$ are not {\em inclusive} of each other, 
where $P$ is inclusive of $Q$ if any occurrence of $Q$ is included in an occurrence of $P$.
We removed such $Q$ from the candidates.
For each frequent substring $P$ and a variable $X$ reported by the algorithm,
we estimate the cover ratio $\frac{|val(X)|}{|P|}$ and show the average for all $P$.
However, as shown in the result below  (Figure~\ref{fig:mem}), 
the suffix array cannot be executed for a larger $S$ due to memory consumption
Additionally, we examined the time and memory consumption of the offline algorithm by~\cite{Nakahara13}.

Table~\ref{tab:result} shows the length of optimum frequent patterns extracted by suffix array
and the length of the corresponding cores extracted by our algorithm as well as the approximation ratio to the optimal one,
where min./max. denote the shortest/longest pattern in the candidates, respectively.
Our algorithm extracted sufficiently long cores for each benchmark.

Figure~\ref{fig:mem} shows the memory consumption for repetitive strings 
(Figure~\ref{fig:mem_ein}-\ref{fig:mem_ker}) and normal strings (Figure~\ref{fig:mem_eng}-\ref{fig:mem_sou}).
The working space was significantly saved by our online strategy, where
offline and SA were executed for each static size of data noted in the figures.

Figure~\ref{fig:time} shows the computation time for each benchmark.
Due to the time-space tradeoff of a succinct data structure, our algorithm was a few times slower than the offline and SA.
The increase in computation time is acceptable for each case.

\begin{table}[t]
\begin{center}
\caption{Statistical information of benchmark string $S$}
\label{tab:txtinfo}
\begin{tabular}{ccccccc}
\hline
             & {\bf einstein}  & {\bf cere}  & {\bf kernel} & {\bf english} & {\bf dna}   & {\bf sources}     \\ \hline
$|S|$ (MB)       & $446$    & $446$ & $246$  & $200$   & $200$ & $200$       \\ 
$|\Sigma|$                & $139$    & $5$   & $160$  & $239$   & $16$ & $230$        \\ \hline
\end{tabular}
\end{center}
\end{table}

\begin{table}[t]
\begin{center}
\caption{Length of optimal $P$ extracted by suffix array (SA) 
and approximate $X$ by proposed algorithm (PA)
with approximation ratio $\frac{|val(X)|}{|P|}$ (\%) for top-100 patterns.} 
\label{tab:result}
\begin{tabular}{rrrrrrrr}
\hline
         &    & {\bf einstein}  & {\bf cere}  & {\bf kernel} & {\bf english} & {\bf dna}   & {\bf sources}     \\ \hline
         &SA& $198,606$ & $4,562$   & $442,124$   & $43,985$  & $3,271$  & $4,776$    \\ 
min.     &PA&  $18,625$ & $4,096$   & $37,205$    & $3,382$   & $268$    & $477$      \\ 
          &\% &  ${\bf 7.6}$ & ${\bf2.3}$   & ${\bf 6.9}$    & ${\bf 7.3}$   & ${\bf 7.1}$    & ${\bf 7.3}$      \\ \hline           
          &SA & $935,920$ & $303,204$ & $2,755,550$ & $98,7770$ & $97,979$ & $307,871$  \\ 
max.     &PA & $342,136$ & $58,906$  & $662,630$   & $16,1320$ & $24,834$ & $57,508$   \\
           &\% & ${\bf 50.0}$ & ${\bf 62.1}$  &$ {\bf 52.8}$   & ${\bf 50.8}$ & ${\bf 63.9}$ & ${\bf 51.7}$   \\ \hline 
           &SA & $259,451$ & $111,284$ & $727,443$   & $116,920$ & $8,241$  & $14,498$   \\ 
mean    &PA & $56,584$  & $12,723$  & $152,903$   & $24,703$  & $1,926$  & $3,279$    \\
           &\% & ${\bf 21.6}$  & ${\bf 11.0}$  & ${\bf 20.0}$   & ${\bf 23.0}$  & ${\bf 22.9}$  & ${\bf 22.0}$    \\ \hline
\end{tabular}
\end{center}
\end{table}

\begin{figure}[tb]
 \begin{minipage}[b]{0.5\linewidth}
  \centering
  \includegraphics[keepaspectratio, scale=0.3]
  {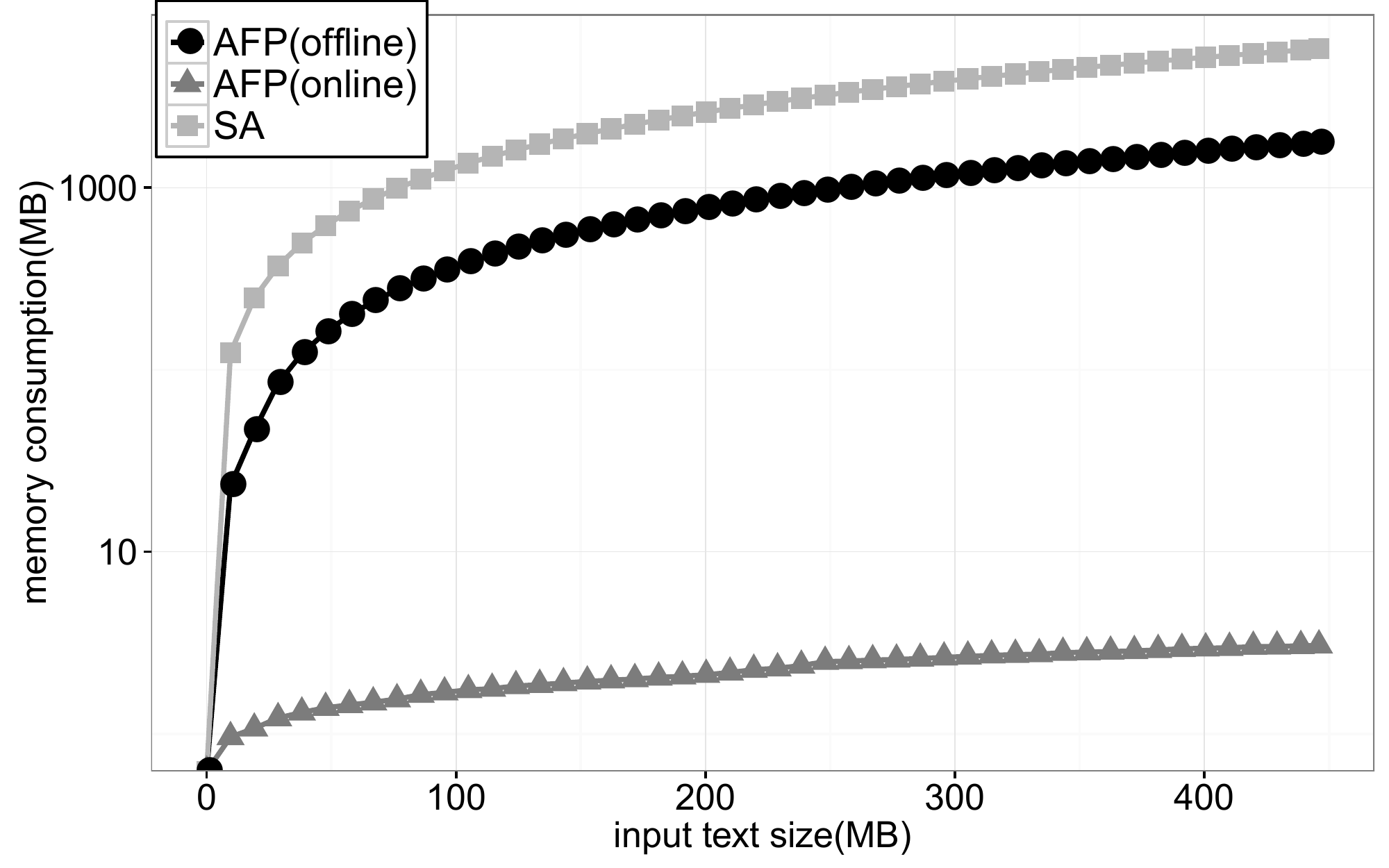}
  \subcaption{einstein}\label{fig:mem_ein}
 \end{minipage}
 \begin{minipage}[b]{0.5\linewidth}
  \centering
  \includegraphics[keepaspectratio, scale=0.3]
  {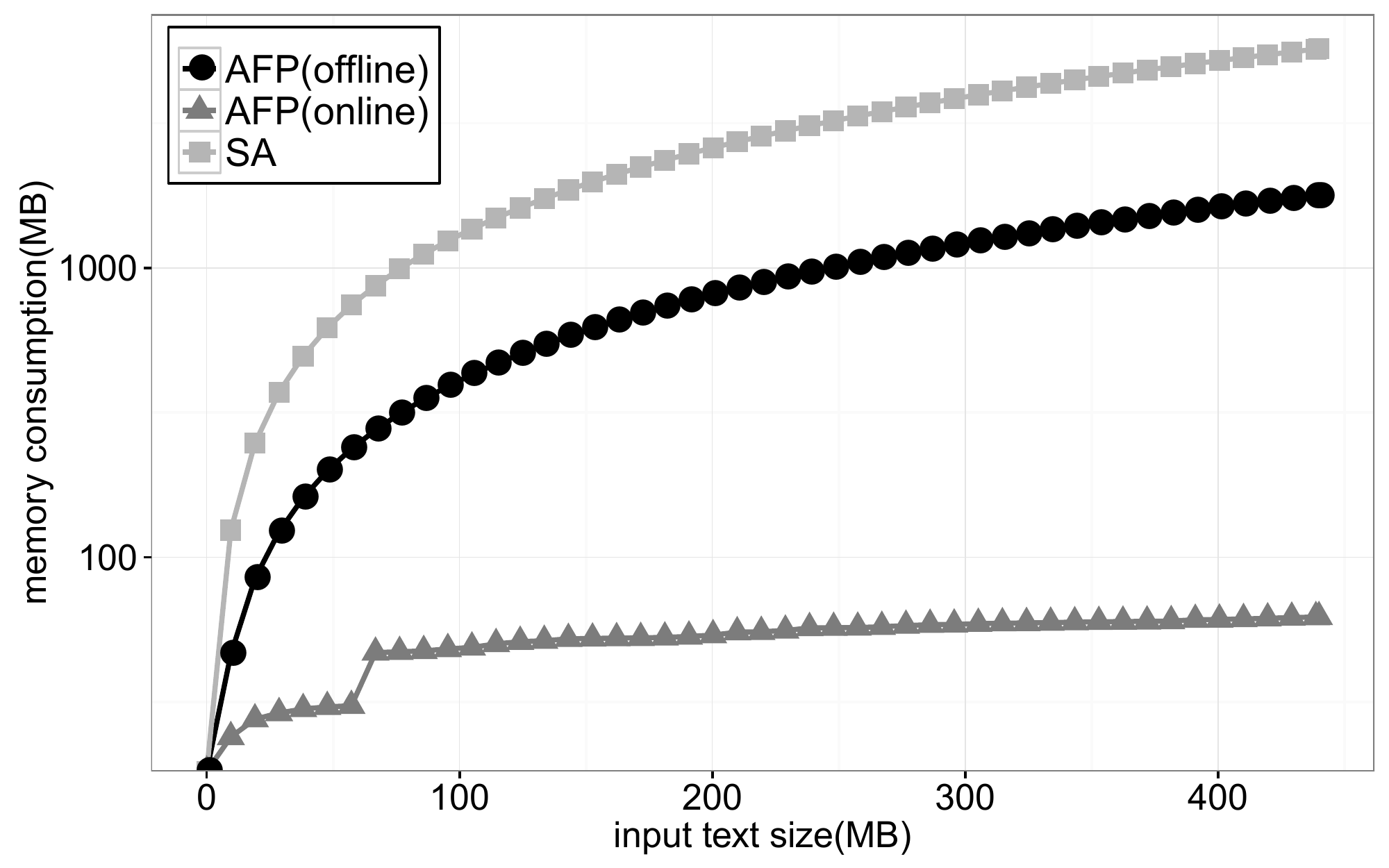}
  \subcaption{cere}\label{fig:mem_cere}
 \end{minipage}\\
 \begin{minipage}[b]{0.5\linewidth}
  \centering
  \includegraphics[keepaspectratio, scale=0.3]
  {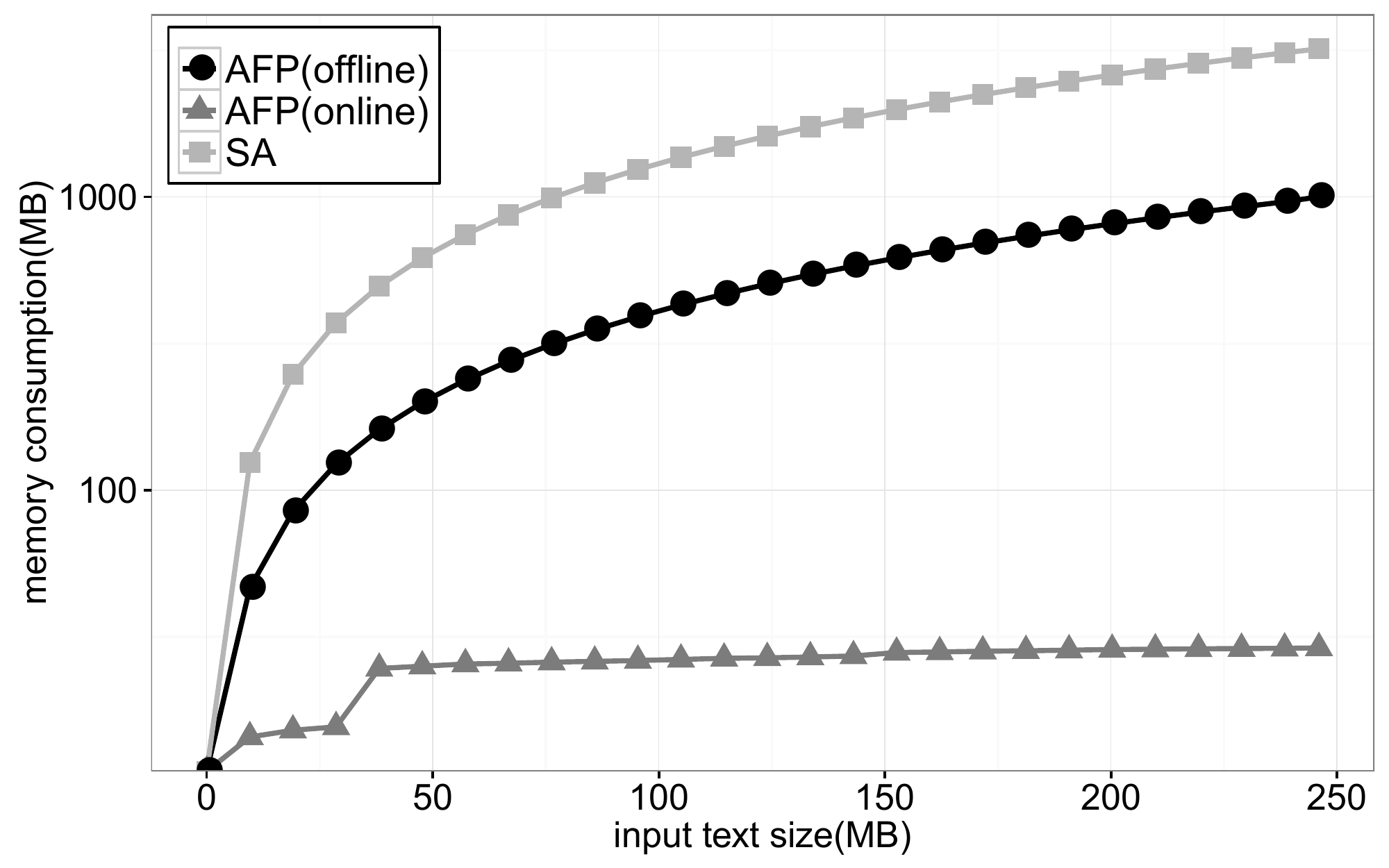}
  \subcaption{kernel}\label{fig:mem_ker}
 \end{minipage}
 \begin{minipage}[b]{0.5\linewidth}
  \centering
  \includegraphics[keepaspectratio, scale=0.3]
  {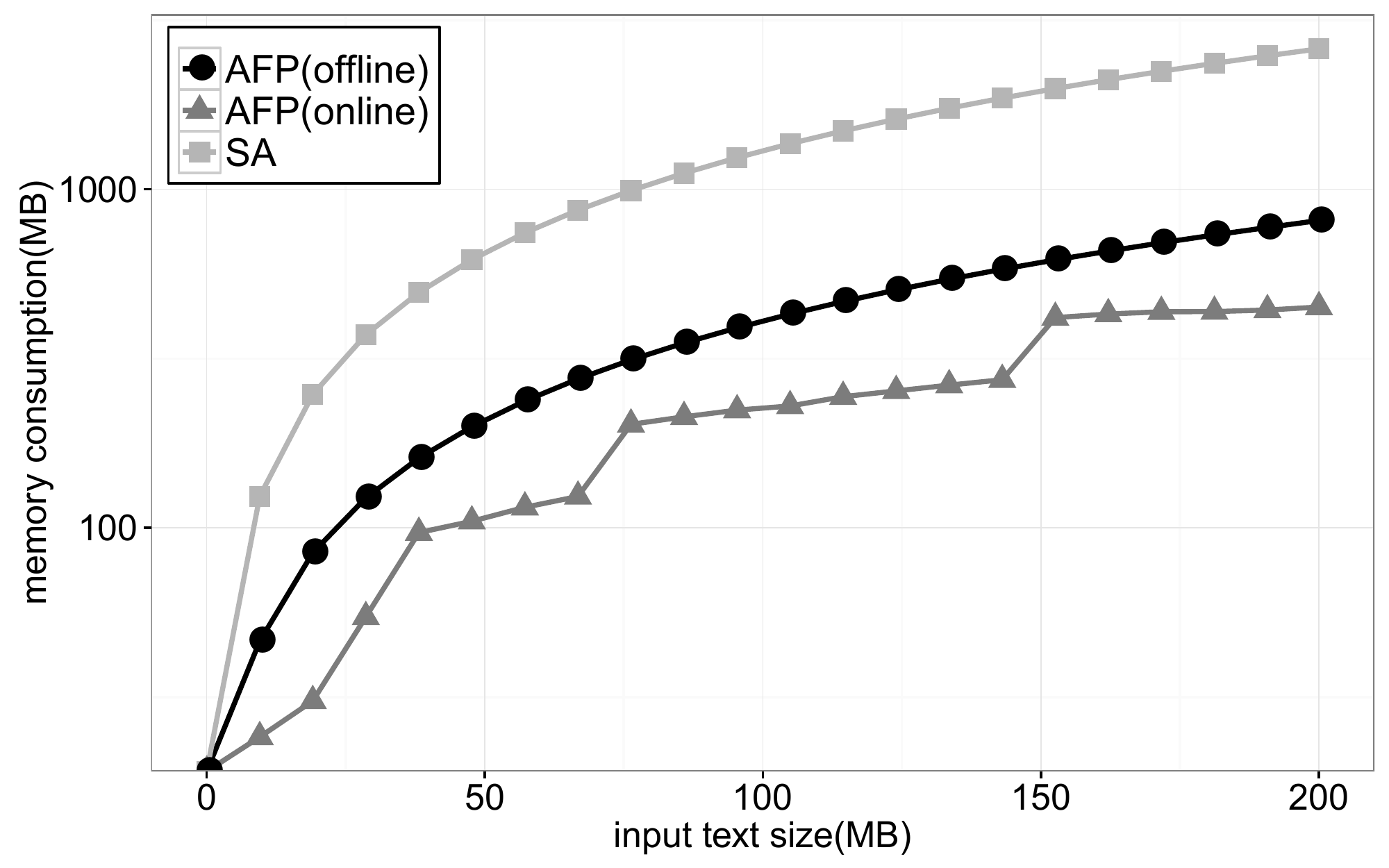}
  \subcaption{english}\label{fig:mem_eng}
 \end{minipage}\\
 \begin{minipage}[b]{0.5\linewidth}
  \centering
  \includegraphics[keepaspectratio, scale=0.3]
  {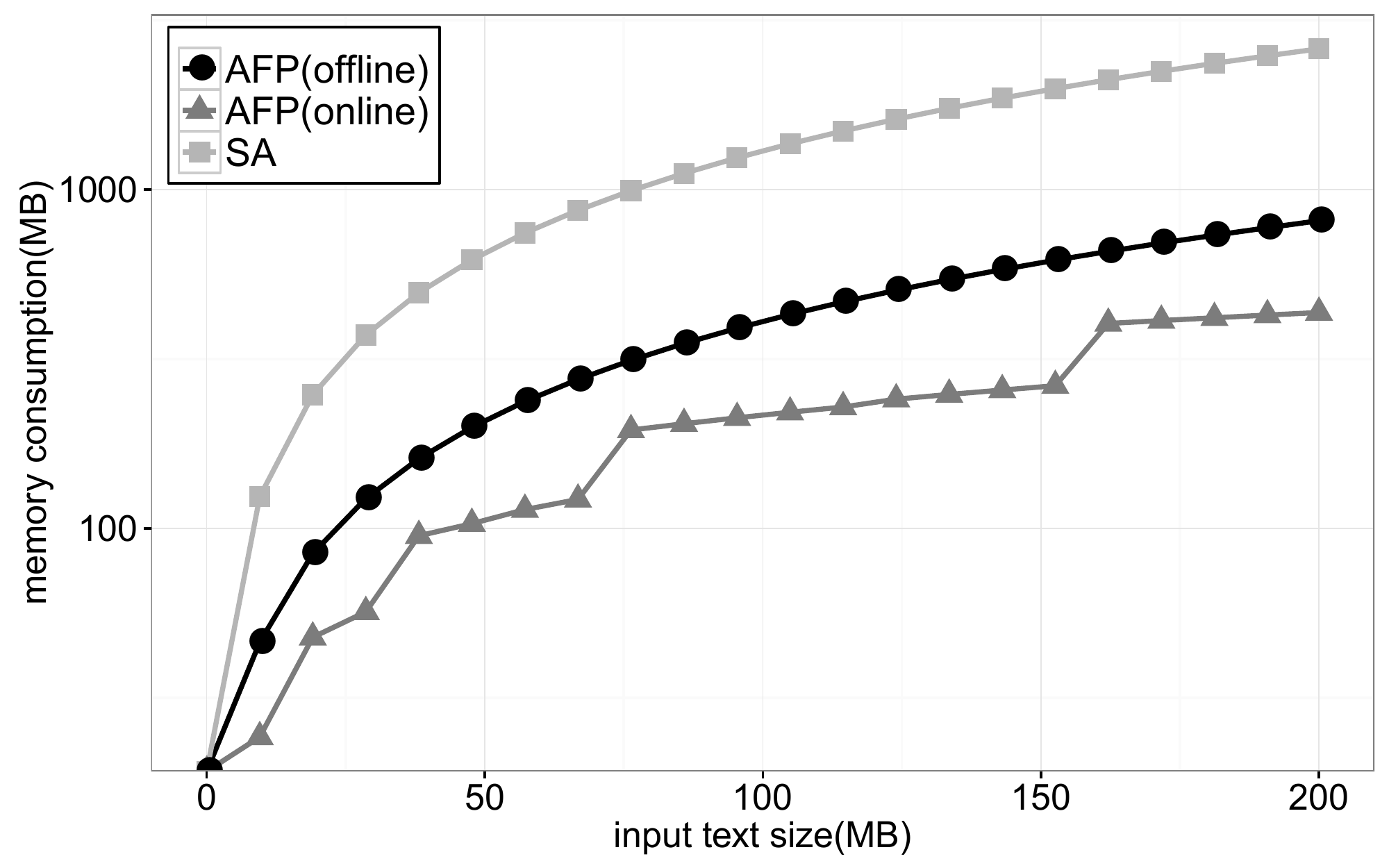}
  \subcaption{dna}\label{fig:mem_dna}
 \end{minipage}
 \begin{minipage}[b]{0.5\linewidth}
  \centering
  \includegraphics[keepaspectratio, scale=0.3]
  {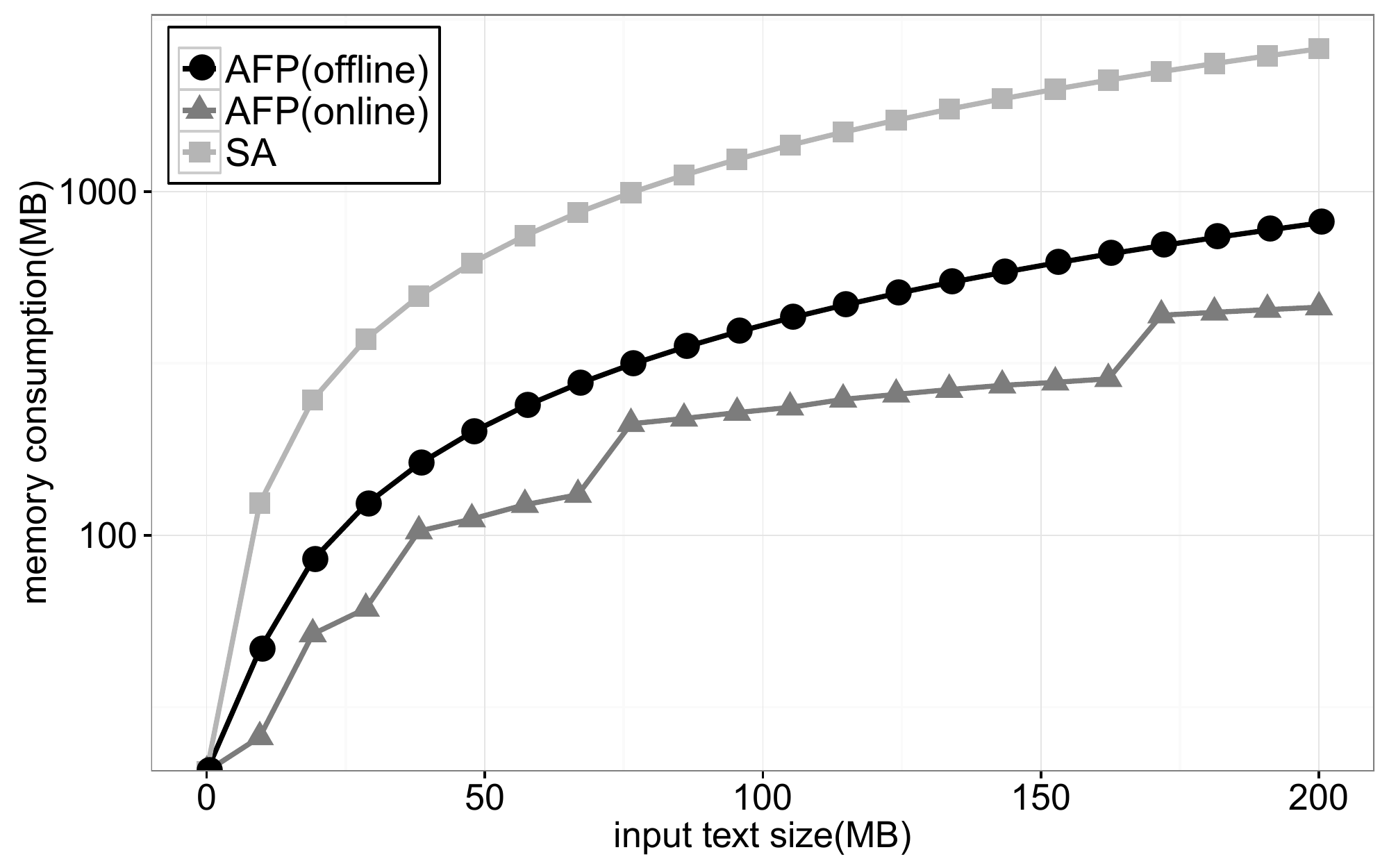}
  \subcaption{sources}\label{fig:mem_sou}
 \end{minipage}\\
\vspace{-.7cm}
 \caption{Memory consumption~(MB)}\label{fig:mem}
 \label{fig:mem}
\end{figure}

\begin{figure}[h]
 \begin{minipage}[b]{0.5\linewidth}
  \centering
  \includegraphics[keepaspectratio, scale=0.3]
  {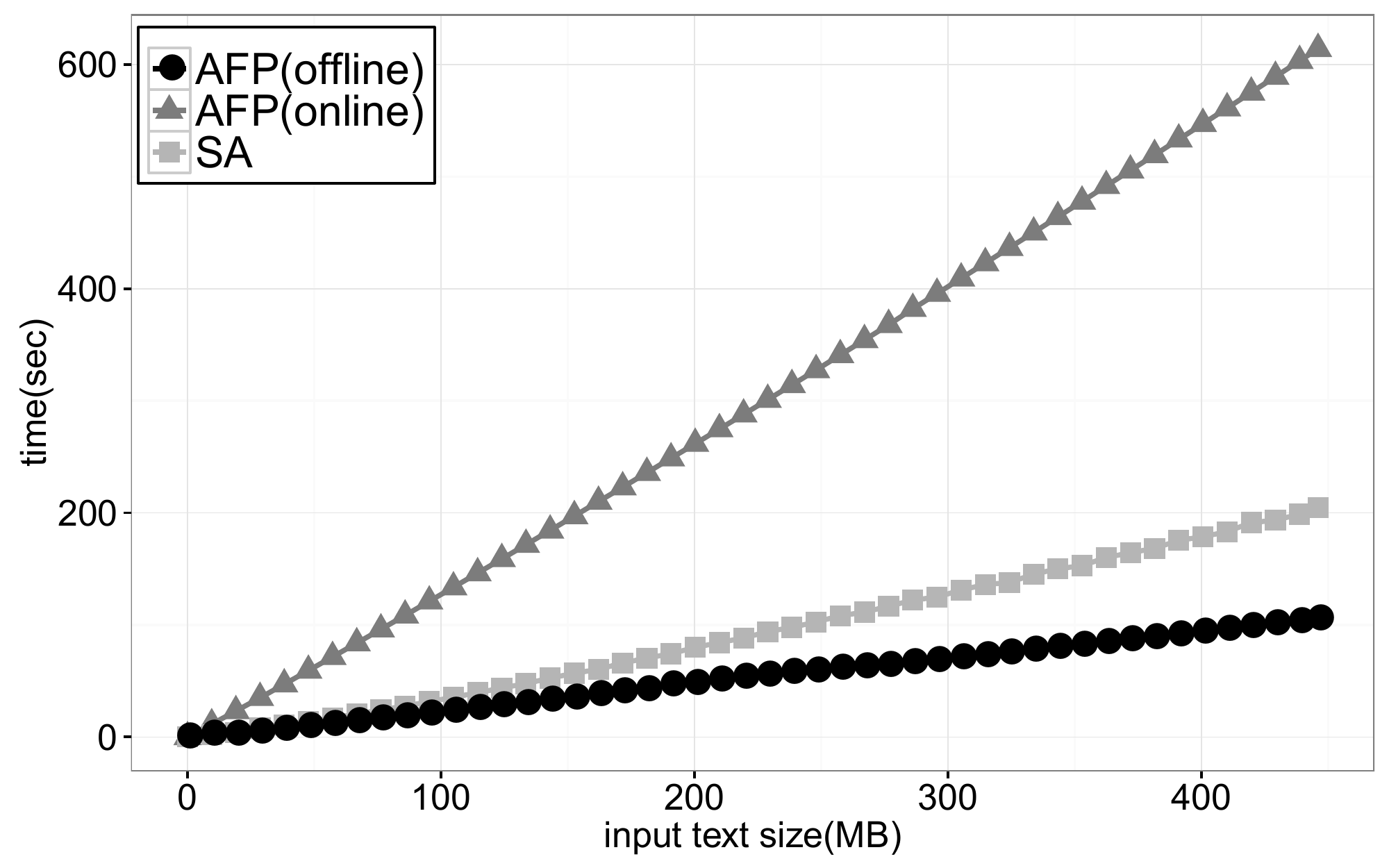}
  \subcaption{einstein}\label{fig:time_ein}
 \end{minipage}
 \begin{minipage}[b]{0.5\linewidth}
  \centering
  \includegraphics[keepaspectratio, scale=0.3]
  {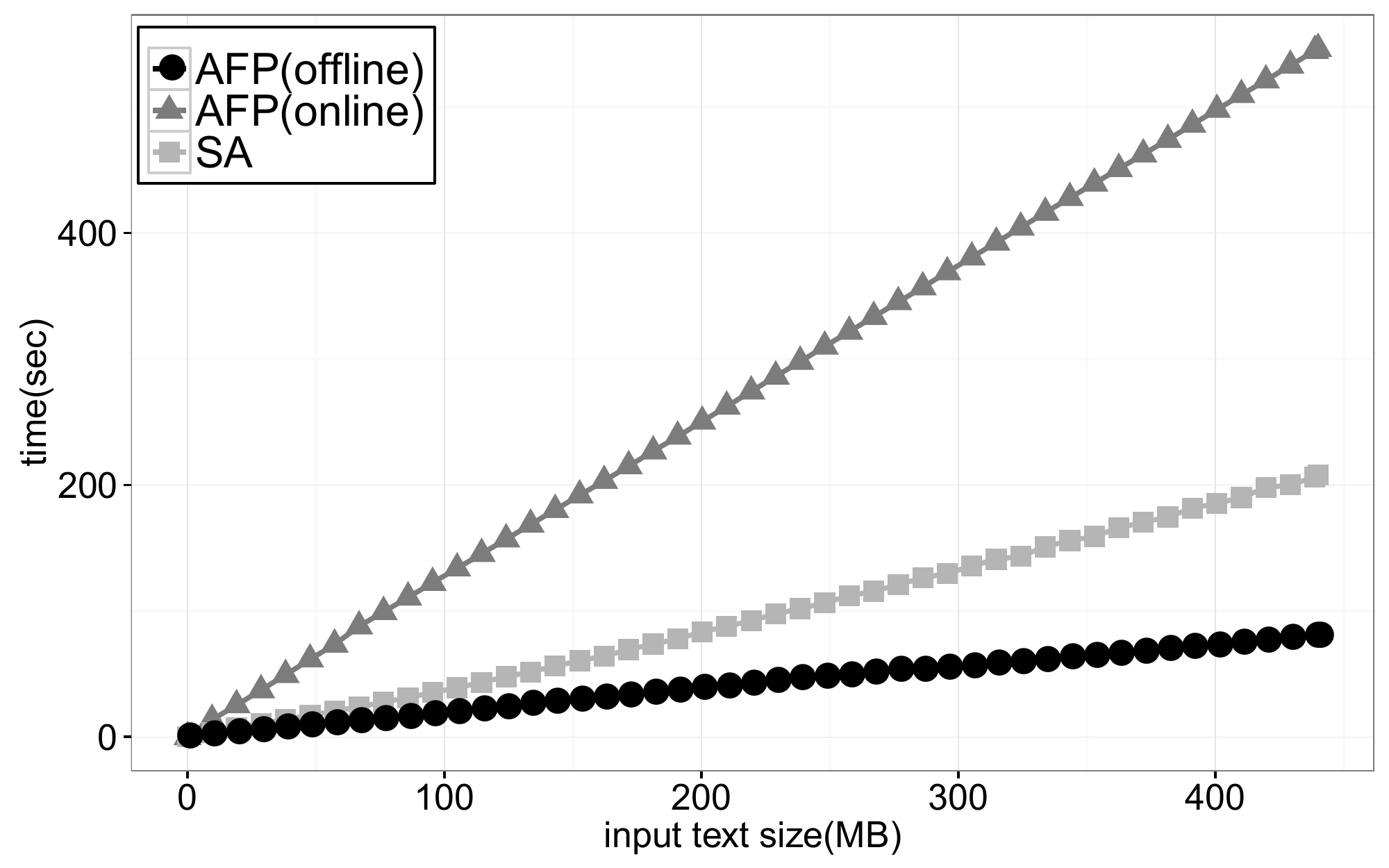}
  \subcaption{cere}\label{fig:time_cere}
 \end{minipage}\\
 \begin{minipage}[b]{0.5\linewidth}
  \centering
  \includegraphics[keepaspectratio, scale=0.3]
  {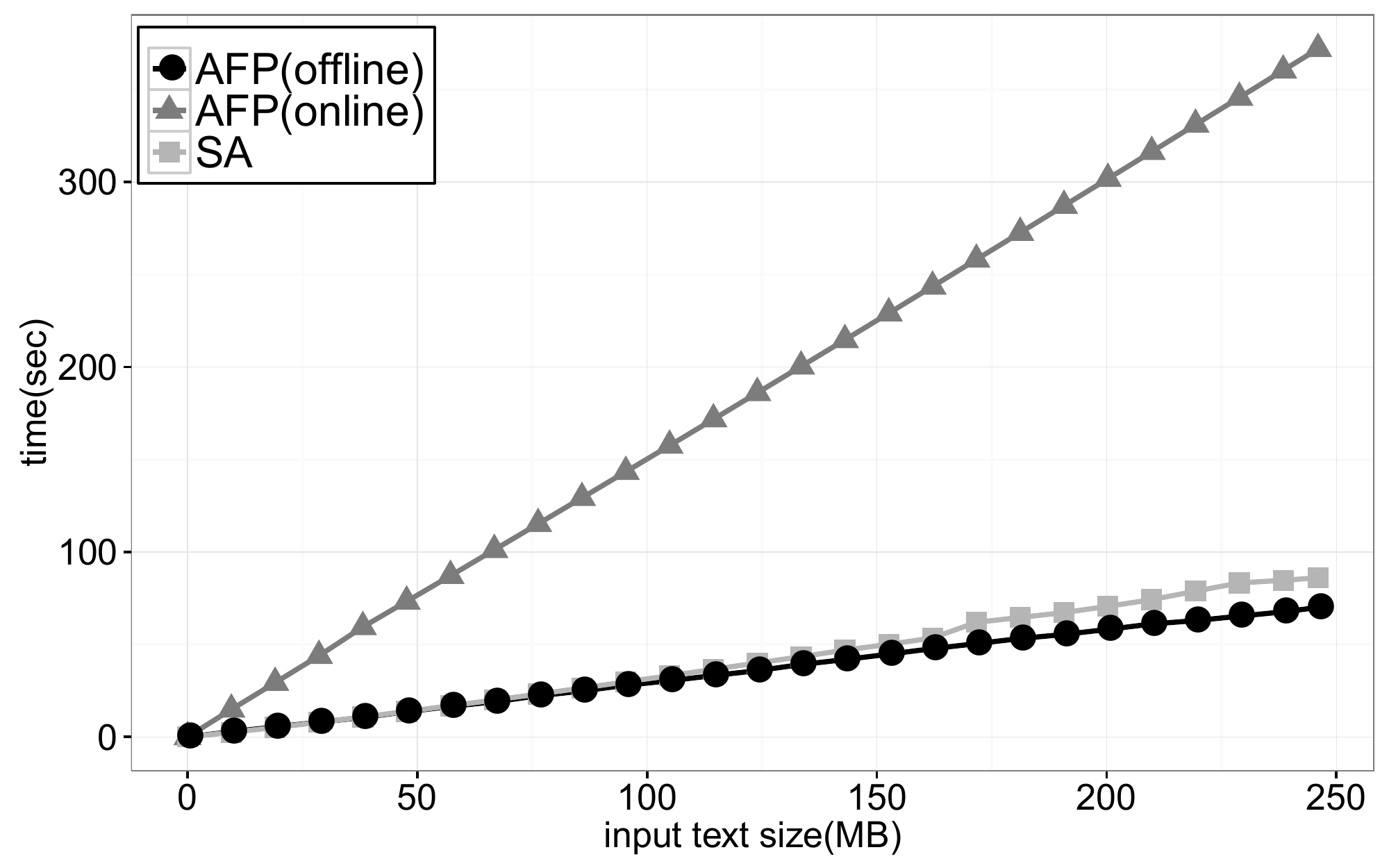}
  \subcaption{kernel}\label{fig:time_ker}
 \end{minipage}
 \begin{minipage}[b]{0.5\linewidth}
  \centering
  \includegraphics[keepaspectratio, scale=0.3]
  {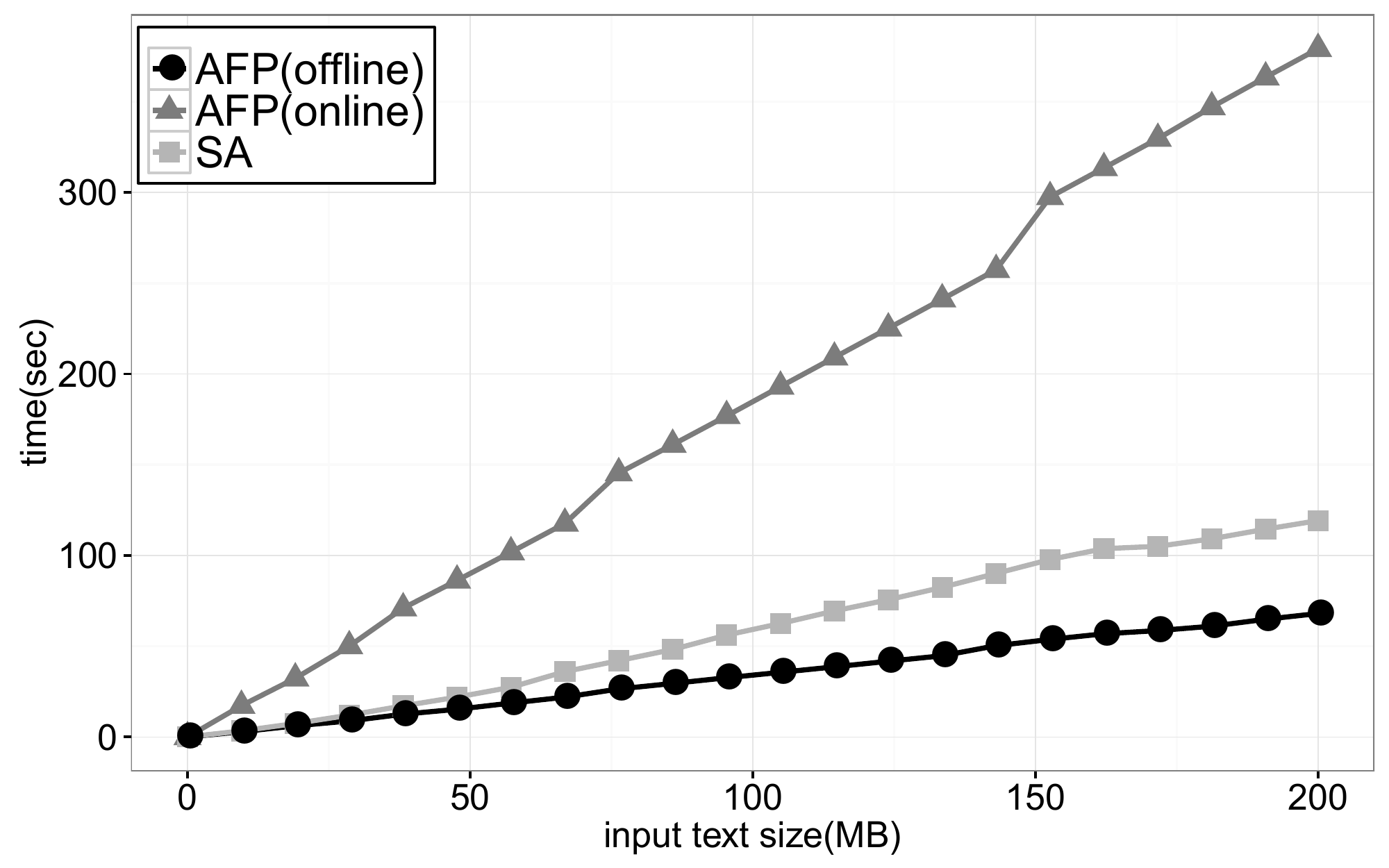}
  \subcaption{english.}\label{fig:time_eng}
 \end{minipage}\\
 \begin{minipage}[b]{0.5\linewidth}
  \centering
  \includegraphics[keepaspectratio, scale=0.3]
  {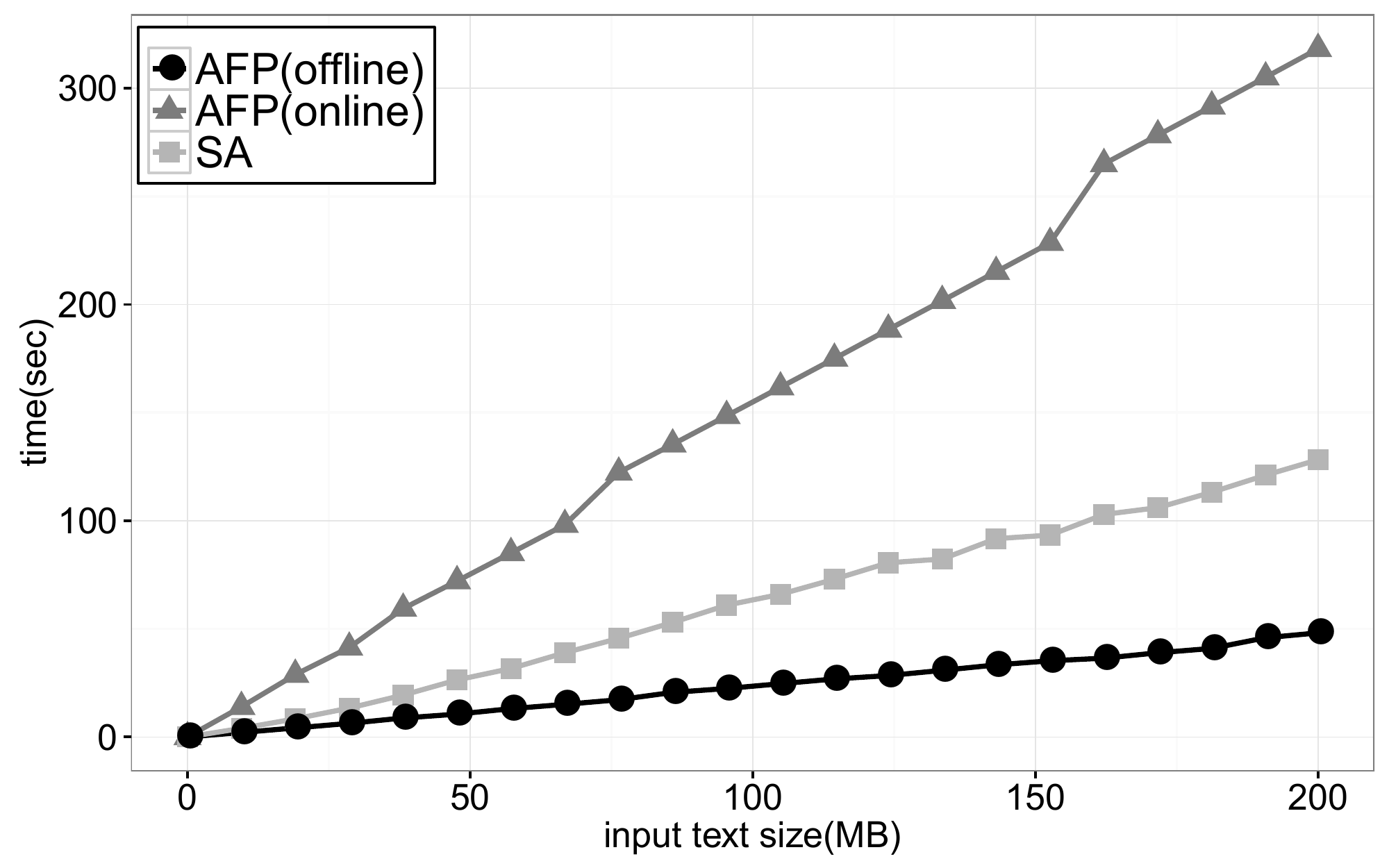}
  \subcaption{dna}\label{fig:time_dna}
 \end{minipage}
 \begin{minipage}[b]{0.5\linewidth}
  \centering
  \includegraphics[keepaspectratio, scale=0.3]
  {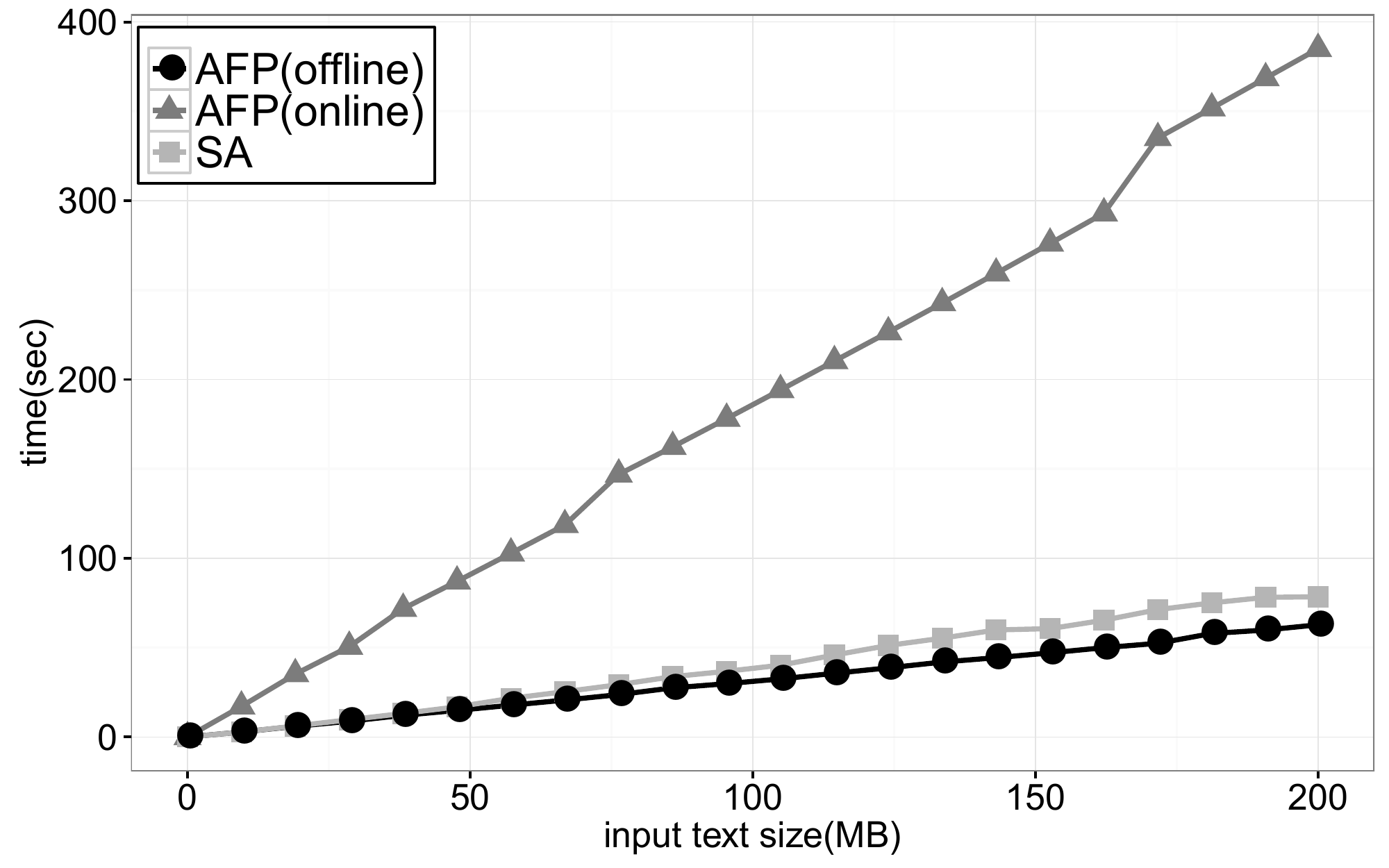}
  \subcaption{sources.}\label{fig:time_sou}
 \end{minipage}\\
\vspace{-.7cm}
 \caption{Computation time~(sec)}\label{fig:cmp}
 \label{fig:time}
\end{figure}

\section{Conclusion}
For the problem of finding frequent patterns, we proposed an online approximation algorithm with a compressed space.
Our algorithm is an improvement of FOLCA: a fully online grammar compression algorithm. 
We improved the theoretical lower bound of the approximation ratio
and presented experimental results exhibiting the efficiency for highly repetitive texts.
There is still a large gap of approximation ratio between theory and practical result.
An improvement of the lower bound is an important future work.

\bibliography{biblio,compress,book,cpm,succinct,math,other}
\bibliographystyle{plain}

\end{document}